\documentclass[
    10pt,
    twocolumn,
    prd,
    preprintnumbers,
    superscriptaddress,
    altaffilletter,
    nofootinbib,
    aps
]{revtex4-2}

\usepackage[a4paper,margin=1in]{geometry}
\usepackage[colorlinks,citecolor=blue,urlcolor=red,bookmarks=false,hypertexnames=true]{hyperref}
\usepackage{amsmath}
\usepackage{amssymb}
\usepackage{amsfonts}
\usepackage{amsthm}
\usepackage{graphicx}
\usepackage{lineno}
\usepackage[usenames,dvipsnames]{xcolor}
\usepackage{xspace}
\usepackage{caption}
\usepackage{subcaption}
\usepackage{acronym}
\usepackage{tikz}
\usetikzlibrary{arrows,shapes,trees,decorations.pathreplacing}
\usepackage{orcidlink}
\usepackage{enumitem}
\usepackage{empheq}

\newtheorem{definition}{Definition}[section]
\newtheorem{theorem}[definition]{Theorem}
\newtheorem{lemma}[definition]{Lemma}

\begin{document}
\title{Dual Cutler–Vallisneri Corrections: Mitigating PSD Drift in Zero-Latency Gravitational-Wave Searches}

\author{James Kennington$^*$ \orcidlink{0000-0002-6899-3833}}
\affiliation{Department of Physics, The Pennsylvania State University, University Park, PA 16802, USA}
\affiliation{Institute for Gravitation and the Cosmos, The Pennsylvania State University, University Park, PA 16802, USA}

\date{\today}


\begin{abstract}
Maximizing pre-merger warning times in gravitational-wave searches requires minimizing algorithmic latency. 
While current pipelines typically rely on truncated linear-phase filters, \emph{minimum-phase whitening} offers a zero-latency alternative that eliminates the acausal look-ahead buffer. 
However, this causal approach exposes the analysis to \emph{spectral drift}, where the whitening operator applied to live data diverges from the static template bank, creating a functional perturbation of the matched-filter metric. 
We develop a perturbative framework generalizing the Cutler--Vallisneri formalism to address these \emph{metric errors}, deriving analytic expressions for the resulting timing, phase, and SNR biases. 
Validated against exact stationary-phase models and numerical injections, these corrections achieve $<1\%$ error. 
Applying this framework to GWTC-4.0 events with realistic 1-week power spectral density (PSD) lags, we find that uncorrected drift induces severe systematics: detector-pair timing biases exceeding $200 \mu$s, phase shifts up to 0.2 rad, and sky-localization errors of $5^{\circ}-10^{\circ}$.
Additionally, we observe a median signal-to-noise ratio (SNR) loss of $3-5\%$, with outliers exceeding $8\%$.
These results demonstrate that while minimum-phase whitening maximizes the early-warning window, analytic drift corrections are essential to maintain detection volume and pointing accuracy in future observing runs.
\end{abstract}
\maketitle
\section{Introduction}
Modern gravitational-wave (GW) data-analysis pipelines rely on matched filtering to extract weak astrophysical signals from non-stationary, colored detector noise \cite{theligoscientificcollaborationGW150914FirstResults2016,allenFINDCHIRPAlgorithmDetection2012a,cutlerGravitationalWavesMergin1994a}. 
The optimal filter requires whitening the detector strain time series so that its residual noise becomes spectrally flat, ensuring the matched filter achieves maximum signal-to-noise ratio (SNR) under Gaussian assumptions \cite{wainsteinExtractionSignalsNoise1970,littenbergBayesLineBayesianInference2015}. 
Traditionally, whitening is performed using linear-phase filters, which perfectly preserve the phase of the input data but incur a fixed group delay \cite{oppenheimDiscretetimeSignalProcessing2010,smithIntroductionDigitalFilters2008}. 
While negligible in offline searches, this latency is problematic in low-latency pipelines such as GstLAL and SGNL, where every second matters for prompt electromagnetic follow-up \cite{messickAnalysisFrameworkPrompt2017b,cannonEarlyWarningDetectionGravitational2012d,nitzPyCBCLiveRapid2018,adamsLowlatencyAnalysisPipeline2016a}.

To maximally reduce whitening-related latency while preserving fidelity, minimum-phase (MP) whitening can be used, in which the filter’s zeros lie strictly inside the unit circle, guaranteeing causality and minimizing group delay \cite{oppenheimDiscretetimeSignalProcessing2010,smithIntroductionDigitalFilters2008}. 
The trade-off is a frequency-dependent phase rotation that must be characterized and controlled. 
A prototype zero-latency whitening (ZLW) implementation based on discrete-time spectral factorization and the folded-cepstrum method showed that such filters can achieve near-real-time operation without numerical instability \cite{tsukadaApplicationZerolatencyWhitening2018}. 
However, no major low-latency search pipelines currently deploy minimum-phase whitening in production \cite{ewingPerformanceLowlatencyGstLAL2024,nitzRapidDetectionGravitational2018,chuSPIIROnlineCoherent2021,alleneMBTAPipelineDetecting2025}. 
The analytic framework developed in this work provides the theoretical foundation required to enable and validate such an implementation in future releases.

In realistic operation, the detector noise power spectral density (PSD) evolves over time due to instrumental configuration, control-loop behavior, and environmental coupling \cite{buikemaSensitivityPerformanceAdvanced2020}. 
Continuous detector-characterization studies document this non-stationarity alongside the appearance and evolution of narrow spectral features (“lines”) across observing runs \cite{covasIdentificationMitigationNarrow2018}. 
Since the PSD used to whiten the detector data is updated in real time while the template bank is typically generated and whitened using a fixed reference PSD, the two can gradually diverge \cite{mageeImpactSelectionBiases2024}. 
This implies that the effective noise weighting applied to the data and the templates is no longer perfectly matched, introducing a functional perturbation to the matched-filter metric. 
While locally small, these spectral divergences integrate across the sensitive bandwidth to produce systematic biases in arrival time and phase that can exceed the statistical uncertainty of high-SNR events \cite{littenbergSeparatingGravitationalWave2010}. 
Uncorrected, these whitening inconsistencies degrade timing precision and pointing accuracy, threatening the efficacy of low-latency electromagnetic follow-up.

We formalize this problem geometrically. 
Standard perturbative approaches, such as the Cutler--Vallisneri (CV) formalism~\cite{cutlerLISADetectionsMassive2007}, typically address \emph{vector perturbations}, where a waveform model error $\delta h$ displaces the signal vector off the search manifold under a fixed metric. 
In this work, we address the dual problem: \emph{metric perturbations}, where the signal model is fixed, but the inner product structure itself warps due to spectral drift. 
We first derive a generalized CV theorem applicable to any functional metric deformation, providing the theoretical basis for recovering the exact projection of these errors onto the extrinsic parameter space~\cite{owenSearchTemplatesGravitational1996a}.

We then apply this general framework to the specific case of minimum-phase whitening, the requisite technique for zero-latency operation~\cite{oppenheimDiscretetimeSignalProcessing2010, tsukadaApplicationZerolatencyWhitening2018}. 
By treating the PSD drift as a generator of whitening-phase distortion, we obtain the \emph{Dual CV corrections}, the explicit analytic expressions for the resulting timing and phase biases. 
We validate these results analytically using a controlled Newtonian stationary-phase approximation (SPA) model~\cite{cutlerGravitationalWavesMergin1994a, allenFINDCHIRPAlgorithmDetection2012a}, where the matched-filter integrals admit closed-form solutions. Further, we validate these results numerically through pipeline-agnostic matched-filter injections. 
Finally, we apply the framework to real detector data from the GWTC-4.0 catalog~\cite{collaborationGWTC40UpdatingGravitationalWave2025,collaborationOpenDataLIGO2025} to assess the astrophysical impact on timing and sky localization under realistic operational lags. 
By quantifying the scale and geometry of whitening-induced biases, this framework provides the foundation for \emph{safe, zero-latency operation} in next-generation pipelines~\cite{messickAnalysisFrameworkPrompt2017b, nitzPyCBCLiveRapid2018, chuSPIIROnlineCoherent2021}, informing both whitening refresh cadence and real-time phase-compensation strategies.
\section{Motivation}

The geometric framework of matched filtering has long been central to gravitational-wave data analysis, primarily for constructing efficient template banks over the intrinsic mass--spin manifold~\cite{owenSearchTemplatesGravitational1996a, hannaBinaryTreeApproach2022c}. However, the extrinsic geometry of the signal-to-noise ratio (SNR) surface~\cite{allenFINDCHIRPAlgorithmDetection2012a, cutlerGravitationalWavesMergin1994a} is equally susceptible to curvature deformations when the data and template whitening filters diverge~\cite{tsukadaApplicationZerolatencyWhitening2018, mageeImpactSelectionBiases2024}. In low-latency pipelines~\cite{messickAnalysisFrameworkPrompt2017b, ewingPerformanceLowlatencyGstLAL2024, nitzPyCBCLiveRapid2018, alleneMBTAPipelineDetecting2025, chuSPIIROnlineCoherent2021}, such deformations manifest as systematic timing and phase biases that degrade network coincidence and sky localization~\cite{fairhurstTriangulationGravitationalWave2009, sullivanTimingSystemLIGO2023, fairhurstLocalizationTransientGravitational2018}. This work unifies these perspectives by treating spectral drift as a perturbative force acting on the extrinsic search subspace. By extending metric-based methods to capture these dynamical whitening effects, we derive analytic correction factors—paralleling treatments of other PSD variations~\cite{zackayDetectingGravitationalWaves2021, littenbergSeparatingGravitationalWave2010}—that allow pipelines to recover the true signal geometry without the prohibitive latency cost of re-filtering.
\section{Methods}
\label{sec:methods}

In this section, we formalize the problem of minimum-phase whitening under spectral drift. We define the perturbative framework for PSD evolution and apply the geometric theorems derived in Appendix~\ref{app:geom} to obtain explicit, first-order analytic corrections for the resulting timing and phase biases.

\subsection{PSD Perturbations and Mismatch}
\label{subsec:mpmp-setup}

We restrict our analysis to a single detector; network-level effects are incorporated later via triangulation geometry (Sec.~\ref{subsubsec:catalog-validation}). Let $S_1(f)$ be a fixed reference PSD, and let $W_1(f)$ denote its minimum-phase whitener, normalized such that $|W_1(f)|^2 = 1/S_1(f)$. In an idealized matched-filter search, both the frequency-domain data $d(f)$ and the template family $h_{(t,\phi)}(f)$ are whitened with $W_1$. The complex matched filter output is then
\begin{equation}
  \rho_0(t,\phi) = \bigl\langle W_1 d,\; W_1 h_{(t,\phi)} \bigr\rangle,
  \label{eq:rho0-def}
\end{equation}
where $\langle u, v\rangle = \int_{-\infty}^{\infty} u(f)\,v^{*}(f)\,df$ denotes the standard $L^2$ inner product~\cite{wainsteinExtractionSignalsNoise1970,allenFINDCHIRPAlgorithmDetection2012a}. Under these ideal conditions, the true coalescence parameters $(t_0,\phi_0)$ form a nondegenerate local maximizer of the squared-SNR surface
\begin{equation}
  J_0(t,\phi) := |\rho_0(t,\phi)|^2.
\end{equation}
The Hessian of $J_0$ at the peak determines the local sensitivity geometry and is proportional to the Fisher information metric restricted to the $(t,\phi)$ subspace (see Appendix~\ref{app:geom})~\cite{cutlerGravitationalWavesMergin1994a,owenSearchTemplatesGravitational1996a}.

In realistic low-latency operations, the templates are whitened with a static reference kernel $W_1$ (fixed during bank generation), while the live data streams are whitened with a \emph{dynamic} minimum-phase filter $W_2$ derived from the locally estimated PSD $S_2(f)$~\cite{adamsLowlatencyAnalysisPipeline2016a,nitzPyCBCLiveRapid2018,nitzRapidDetectionGravitational2018,chuSPIIROnlineCoherent2021,ewingPerformanceLowlatencyGstLAL2024,tsukadaApplicationZerolatencyWhitening2018}. This creates a mismatched statistic:
\begin{equation}
  \rho_\epsilon(t,\phi) = \bigl\langle W_2 d,\; W_1 h_{(t,\phi)} \bigr\rangle,
  \label{eq:rho-mpmp-def}
\end{equation}
where the subscript $\epsilon$ anticipates the perturbative regime $|S_2 - S_1| \ll S_1$. We treat $\rho_\epsilon$ as a one-parameter family of statistics and track the shift of the maximizer of $J_\epsilon = |\rho_\epsilon|^2$ as the mismatch grows~\cite{biscoveanuQuantifyingEffectPower2020,vitaleEffectCalibrationErrors2012a,driggersImprovingAstrophysicalParameter2019}.

To model non-stationary spectral evolution~\cite{buikemaSensitivityPerformanceAdvanced2020,covasIdentificationMitigationNarrow2018,davisImprovingSensitivityAdvanced2019}, we write the data PSD as a functional perturbation of the reference:
\begin{equation}
  S_2(f) = S_1(f)\,e^{2\epsilon a(f)}, \quad 0 < \epsilon \ll 1,
  \label{eq:psd-perturb}
\end{equation}
where $a(f)$ is a real, bounded, smooth drift function. The dimensionless parameter $\epsilon$ controls the perturbation scale, satisfying $\epsilon\|a\|_{L^\infty(\mathcal{B})} \ll 1$ over the analysis band $\mathcal{B}$. This exponential form conveniently mirrors the log-amplitude structure of minimum-phase factorization (Appendix~\ref{app:mp}). It captures common drift morphologies, such as localized line features (where $a(f)$ is a Gaussian bump) or broadband sensitivity shifts (where $a(f)$ varies slowly)~\cite{buikemaSensitivityPerformanceAdvanced2020,covasIdentificationMitigationNarrow2018,davisImprovingSensitivityAdvanced2019,biscoveanuQuantifyingEffectPower2020}.

We relate this spectral drift to the induced perturbation of the whitening kernel. Minimum-phase whitening associates a unique complex frequency response to any PSD $S_i$:
\begin{equation}
  W_i(f) = \exp\!\bigl[L_i(f) + i\phi_i(f)\bigr], \quad L_i(f) = -\tfrac{1}{2}\ln S_i(f),
  \label{eq:Wi-logform}
\end{equation}
where the phase $\phi_i = \mathcal{H}[L_i]$ is the Hilbert transform of the log-magnitude~\cite{oppenheimDiscretetimeSignalProcessing2010,smithIntroductionDigitalFilters2008,bocheSpectralFactorizationWhitening2005,damera-venkataDesignOptimalMinimumphase2000,mecozziNecessarySufficientCondition2016,tsukadaApplicationZerolatencyWhitening2018}. Substituting Eq.~\eqref{eq:psd-perturb} into Eq.~\eqref{eq:Wi-logform}, the log-magnitude and phase perturb linearly:
\begin{equation}
\begin{aligned}
  L_2(f) &= L_1(f) - \epsilon a(f), \\
  \phi_2(f) &= \phi_1(f) - \epsilon \mathcal{H}[a](f).
\end{aligned}
  \label{eq:Lphi-perturb}
\end{equation}
Consequently, the relative perturbation of the whitening kernel is given to first order by:
\begin{equation}
  \frac{\delta W(f)}{W_1(f)} \equiv \frac{W_2(f)}{W_1(f)} - 1 \approx -\epsilon \bigl[a(f) + i\Phi_a(f)\bigr],
  \label{eq:deltaW}
\end{equation}
where we have defined the \emph{whitening phase perturbation} $\Phi_a(f) \equiv \mathcal{H}[a](f)$. The uniform bound $\|\delta W/W_1\| = \mathcal{O}(\epsilon)$ ensures validity of the expansion~\cite{ephremidzeNonoptimalSpectralFactorizations2016,ephremidzeQuantitativeResultsContinuity2020}.

To isolate the geometric impact of this mismatch, we work in the signal-only limit~\cite{cutlerGravitationalWavesMergin1994a,vitaleEffectCalibrationErrors2012a,biscoveanuQuantifyingEffectPower2020,mageeImpactSelectionBiases2024}, setting $d(f) = h_{(t_0,\phi_0)}(f)$. In the next subsection, we apply the Generalized Cutler--Vallisneri Theorem (Appendix~\ref{app:geom}) to the perturbed surface $J_\epsilon$ to derive the analytic timing and phase biases $(\delta t, \delta\phi)$.
\subsection{First-Order Timing and Phase Shifts}
\label{subsec:first-order-mpmp}

We now apply the geometric result of Appendix~\ref{app:geom} to the MP--MP configuration of Sec.~\ref{subsec:mpmp-setup} to obtain explicit first-order expressions for the timing and phase biases.

Recall that for each whitening mismatch we have a matched-filter family $J_\epsilon(t,\phi) := |\rho_\epsilon(t,\phi)|^2$, with $(t_0,\phi_0)$ a non-degenerate local maximizer of $J_0$. Writing $\theta^i = (t,\phi)$, we view $J_\epsilon(\theta)$ as a smooth one-parameter family of scalar fields on the two-dimensional parameter space. The shift-of-maximum corollary (Theorem~\ref{thm:gen-cv} in Appendix~\ref{app:geom}) states that the first-order shift $\delta\theta$ satisfies~\cite{cutlerGravitationalWavesMergin1994a,vitaleEffectCalibrationErrors2012a}:
\begin{equation}
\begin{aligned}
  \delta\theta^i &= g^{ij}\,k_j, \\
  g_{ij} &\equiv -\partial_i\partial_j J_0(\theta_0), \\
  k_j &\equiv \partial_\epsilon \partial_j J_\epsilon(\theta_0)\big|_{\epsilon=0},
\end{aligned}
  \label{eq:geom-shift-J}
\end{equation}
where $g_{ij}$ is the Hessian metric (proportional to Fisher information) and $k_j$ is the generalized force vector.

We first compute the metric components $g_{ij}$. Adopting the standard extrinsic parametrization~\cite{cutlerGravitationalWavesMergin1994a}:
\begin{equation}
  h_{(t,\phi)}(f) = h_{\mathrm{intr}}(f)\,e^{i(2\pi f t - \phi)},
  \label{eq:extrinsic-factorisation}
\end{equation}
we define the whitened signal power $w(f) \equiv |W_1(f)h_{\mathrm{intr}}(f)|^2$. Working in the signal-only gauge $d = h_{(t_0,\phi_0)}$ and shifting coordinates so $(t_0,\phi_0) = (0,0)$~\cite{cutlerGravitationalWavesMergin1994a,vitaleEffectCalibrationErrors2012a,biscoveanuQuantifyingEffectPower2020}, the unperturbed surface is
\begin{equation}
  \rho_0(t,\phi) = \int w(f)\,e^{i(2\pi f t - \phi)}\,df, \quad J_0 = |\rho_0|^2.
\end{equation}
Evaluating the Hessian at the peak yields the diagonal metric components:
\begin{equation}
\begin{aligned}
  g_{tt} &= -\partial_t^2 J_0(0,0) = (2\pi)^2 \int f^2\,w(f)\,df, \\
  g_{t\phi} &= -\partial_t\partial_\phi J_0(0,0) = 0, \\
  g_{\phi\phi} &= -\partial_\phi^2 J_0(0,0) = \int w(f)\,df.
\end{aligned}
  \label{eq:metric-components}
\end{equation}
These components encode the local curvature (stiffness) of the SNR peak against timing and phase shifts~\cite{owenSearchTemplatesGravitational1996a,cutlerGravitationalWavesMergin1994a}.

Next, we compute the force vector $k_j$. From Sec.~\ref{subsec:mpmp-setup}, the first-order whitening perturbation is
\begin{equation}
  \frac{\delta W}{W_1} \approx -\epsilon\bigl[a(f) + i\Phi_a(f)\bigr].
\end{equation}
Differentiating $J_\epsilon = |\rho_\epsilon|^2$ with respect to $\epsilon$ at the peak shows that only the \emph{imaginary} part of the perturbation (phase) contributes to the shift; the real part (amplitude drift $a(f)$) affects the peak height but not its location to first order. The resulting force components are:
\begin{equation}
\begin{aligned}
  k_t &= \partial_\epsilon \partial_t J_\epsilon(\theta_0)\big|_{\epsilon=0} = (2\pi)\int f\,\Phi_a(f)\,w(f)\,df, \\
  k_\phi &= \partial_\epsilon \partial_\phi J_\epsilon(\theta_0)\big|_{\epsilon=0} = \int \Phi_a(f)\,w(f)\,df.
\end{aligned}
  \label{eq:forcing-components}
\end{equation}
Geometrically, the whitening phase error $\Phi_a(f)$ acts as a frequency-dependent potential that tilts the SNR manifold.

Finally, combining the metric and force terms (and noting $g_{t\phi}=0$), we obtain the \textbf{Dual Cutler--Vallisneri corrections}:
\begin{align}
  \delta t^{(1)} &= g^{tt}k_t = \frac{\int (2\pi f)\,\Phi_a(f)\,w(f)\,df}{\int (2\pi f)^2\,w(f)\,df},
  \label{eq:dt-first-order} \\[0.5em]
  \delta\phi^{(1)} &= g^{\phi\phi}k_\phi = \frac{\int \Phi_a(f)\,w(f)\,df}{\int w(f)\,df}.
  \label{eq:dphi-first-order}
\end{align}
These formulas represent the main analytic result: timing and phase biases are power-weighted spectral averages of the whitening phase mismatch. The denominators represent the signal's stiffness (bandwidth), while the numerators represent the driving force of the spectral drift. In Sec.~\ref{sec:results}, we validate these expressions against exact models and numerical injections.
\subsection{Validation Strategy}
\label{subsec:validation-setups}

We validate the first-order Dual CV corrections using three complementary frameworks, each probing a distinct regime of the theory: exact analytic consistency, numerical robustness under controlled injection, and operational relevance using real astrophysical data.

\subsubsection*{Analytic SPA Validation}
For theoretical cross-validation, we employ a Newtonian stationary-phase approximation (SPA) inspiral model ~\cite{drozGravitationalWavesInspiraling1999, allenFINDCHIRPAlgorithmDetection2012a} ($|\tilde{h}(f)| \propto f^{-7/6}$) embedded in a power-law noise environment $S_1(f) \propto f^p$. In this idealized setting, the whitened signal power $w(f)$ admits a closed analytic form, allowing us to derive the \emph{exact} shift of the SNR peak under a linear whitening-phase perturbation $\Phi_a(f) = \alpha + \beta f$ (see Appendix~\ref{app:spa-linear}). We compare this exact solution $(\Delta t_{\mathrm{full}}, \Delta\phi_{\mathrm{full}})$ directly against the perturbative prediction $(\delta t^{(1)}, \delta\phi^{(1)})$ derived from our geometric theorem. This comparison provides a stringent test of the mathematical formalism, verifying that the first-order expansion captures the true physical displacement of the matched-filter maximum in the limit of small drift.

\subsubsection*{Numerical PSD-Drift Validation}
To test the corrections under realistic signal-processing conditions, we implement a full pipeline-agnostic injection study using the \texttt{zlw} codebase. We utilize a reference PSD $S_1(f)$ and a perturbed PSD $S_2(f) = S_1(f)\exp[2\epsilon a(f)]$, where $a(f)$ is a Gaussian bump modeling a drifting spectral feature (e.g., a wandering calibration line centered at $150\,\mathrm{Hz}$ with width $40\,\mathrm{Hz}$). The signal model consists of SPA waveforms injected into a $T=4\,\mathrm{s}$ analysis segment with a sampling rate of $4096\,\mathrm{Hz}$ ($N_{\mathrm{FFT}}=16384$ points), covering a frequency band of $20$--$512\,\mathrm{Hz}$. Minimum-phase whitening filters $W_1$ and $W_2$ are constructed via the folded-cepstrum algorithm (Appendix~\ref{app:mp}) ~\cite{oppenheimDiscretetimeSignalProcessing2010, tsukadaApplicationZerolatencyWhitening2018} on this grid. We evaluate the matched-filter surfaces $\rho_0 = \langle W_1 d, W_1 h \rangle$ and $\rho_\epsilon = \langle W_2 d, W_1 h \rangle$, extracting the ``true'' numerical biases $(\Delta t_{\mathrm{full}}, \Delta\phi_{\mathrm{full}})$ by maximizing the interpolated SNR$^2$ surface $J_\epsilon(t,\phi)$ within a $\pm 5\,\mathrm{ms}$ window. By varying the drift amplitude $\epsilon$ over three orders of magnitude ($10^{-4} \le \epsilon \le 10^{-1}$), we quantify the convergence rate of our analytic approximation and identify the breakdown scale where higher-order metric terms become significant.

\subsubsection*{Heuristic Catalog Study}
\label{subsubsec:methods-catalog}
To assess the impact on real gravitational-wave detections, we perform a re-analysis of compact binary coalescences from the GWTC-4.0 catalog~\cite{collaborationGWTC40UpdatingGravitationalWave2025}. We select events with confident source-frame masses and GPS times, converting parameters to the detector frame. For each contributing interferometer (H1, L1, V1) \cite{theligoscientificcollaborationProspectsObservingLocalizing2020, theligoscientificcollaborationGuideLIGOVirgoDetector2020}, we construct a \textbf{spectral drift pair} using open strain data obtained from the Gravitational Wave Open Science Center (GWOSC)~\cite{collaborationOpenDataLIGO2025}. The ``live'' data PSD ($S_2$) is estimated from $128\,\mathrm{s}$ of strain centered on the event time $t_c$ using median-Welch averaging with $4\,\mathrm{s}$ segments, matching the typical scale of low-latency tracking. The reference PSD ($S_1$) is estimated from a science segment approximately \textbf{1 week prior} to the event using longer $32\,\mathrm{s}$ segments, simulating the stale, high-precision whitening filter often used in template bank generation. Using \textsc{LALSimulation}, we generate the corresponding IMRPhenomD waveform \cite{husaFrequencydomainGravitationalWaves2016} $\tilde{h}(f)$ on a $20$--$1024\,\mathrm{Hz}$ grid and compute the per-detector Dual CV corrections $(\delta t^{(1)}, \delta\phi^{(1)})$ and fractional SNR loss $\delta\rho/\rho$. Finally, to estimate the impact on multimessenger follow-up, we propagate these timing biases into sky-localization errors. We interpret the shifts $\delta t^{(1)}$ as additive errors in the arrival time difference $\Delta \tau_{ij}$ between detector pairs and solve for the perturbed sky position $(\alpha', \delta')$ that minimizes the residual delays relative to the catalog location. This yields the induced angular shift $\Delta\Omega$ and the systematic offset in Right Ascension and Declination $(\Delta\alpha, \Delta\delta)$, providing a direct measure of the pointing error introduced by uncorrected PSD drift.
\section{Results}
\label{sec:results}

This section presents three converging lines of evidence validating the Dual Cutler--Vallisneri framework: analytic verification in a controlled SPA model, numerical injection tests with realistic minimum-phase whitening, and an empirical assessment of biases in GWTC-4.0 events. Our approach parallels the foundational work of Cutler and Vallisneri~\cite{cutlerLISADetectionsMassive2007}, which characterized how vector perturbations of the \emph{signal model} (waveform errors) induce parameter shifts via the likelihood geometry. We generalize this picture to the dual regime essential for low-latency operations: rather than perturbing the waveform vector, we consider functional perturbations of the \emph{inner product metric} itself, driven by the spectral evolution of the whitening filter. By treating coalescence time and phase $(t, \phi)$ on the same geometric footing as intrinsic parameters, the resulting formulas quantify how minimum-phase whitening drift deforms the extrinsic SNR surface. In what follows, we demonstrate that these analytic corrections accurately predict the systematic biases observed in practical detection pipelines.
\subsection{Analytical Validation via SPA Model}
\label{sec:results-spa}

We begin by validating the first-order Dual CV corrections using a controlled Newtonian stationary-phase approximation (SPA) inspiral model \cite{drozGravitationalWavesInspiraling1999,cutlerGravitationalWavesMergin1994a}. In the case of a strictly linear whitening-phase perturbation $\Phi_a(f) = \alpha + \beta f$, the matched-filter integral admits a closed analytic form in terms of incomplete gamma functions (see Appendix~\ref{app:spa-linear}). The \emph{exact} first-order shifts of the SNR peak derived from this closed form are:
\begin{equation}
  \Delta\phi_{\mathrm{full}}^{(1)} = \epsilon\left[\alpha + \beta\frac{m_1}{m_0}\right], \quad
  \Delta t_{\mathrm{full}}^{(1)} = \frac{\epsilon}{2\pi}\left[\beta + \alpha\frac{m_1}{m_2}\right],
\end{equation}
where $m_k = \int f^{k-n}df$ are the SPA frequency moments \cite{poissonGravityNewtonianPostNewtonian2014a}. These expressions coincide \emph{exactly} with the geometric predictions $(\delta t^{(1)}, \delta\phi^{(1)})$ derived in Sec.~\ref{subsec:first-order-mpmp}. This term-by-term identity provides a stringent internal consistency check, confirming that the perturbative geometric derivation recovers the true physical displacement of the maximum in an analytically tractable regime.

To probe a more realistic setting with nonlinear phase structure, we extend the perturbation to quadratic order: $\Phi_a(f) = \alpha + \beta f + \gamma f^2$. While no closed form exists for this case, the full matched filter $\rho_\epsilon$ can be evaluated numerically with high precision. We compare the exact numerical shift against the geometric prediction:
\begin{equation}
  \delta\phi^{(1)} = \epsilon\,\frac{\int w(f)\Phi_a(f)\,df}{\int w(f)\,df}.
\end{equation}
As shown in Fig.~\ref{fig:spa-psd}, the analytic correction tracks the full numerical bias $\Delta\phi_{\mathrm{full}}$ across more than two decades of perturbation strength $\epsilon$. The fractional error $|\delta\phi^{(1)} - \Delta\phi_{\mathrm{full}}|/|\Delta\phi_{\mathrm{full}}|$ scales linearly with $\epsilon$, exhibiting clean first-order convergence ($\mathcal{O}(\epsilon)$ residuals) in the limit $\epsilon \to 0$. This confirms that the Dual CV corrections accurately capture the leading-order deformation of the SNR manifold even in the presence of nonlinear whitening phase curvature.

\begin{figure}[ht]
  \centering
  \includegraphics[width=\columnwidth]{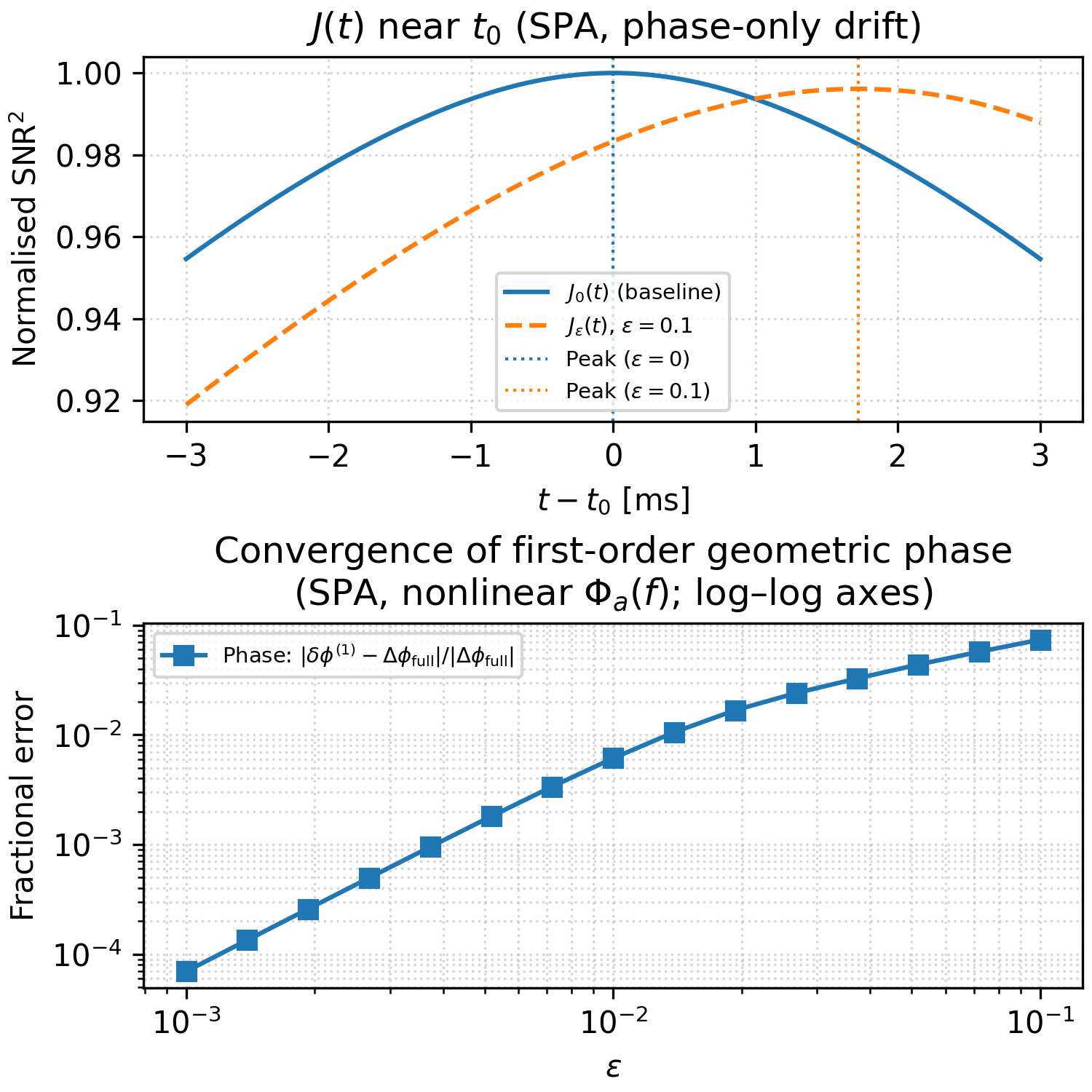}
  \caption{
    \textbf{Analytic Validation (SPA Model).}
    Verification of Dual CV corrections under a nonlinear whitening-phase perturbation $\Phi_a(f) = \alpha + \beta f + \gamma f^2$.
    \emph{Top:} Normalized SNR$^2(t)$ slice for a representative perturbation $\epsilon=0.1$, showing the physical displacement of the peak relative to the baseline (blue).
    \emph{Bottom:} Convergence of the analytic phase correction. The fractional error between the geometric prediction $\delta\phi^{(1)}$ and the exact numerical shift $\Delta\phi_{\mathrm{full}}$ scales linearly with $\epsilon$, confirming the validity of the first-order expansion.
  }
  \label{fig:spa-psd}
\end{figure}
\subsection{Numerical Validation}
\label{sec:results-num}

We test the geometric Dual CV corrections in a realistic signal-processing environment using a baseline power-law PSD $S_1(f) \propto f^2$ perturbed by a Gaussian spectral feature. The drift family is defined as
\begin{equation}
  S_2(f;\epsilon) = S_1(f)\,\exp\!\bigl[2\epsilon\,a(f)\bigr],
  \label{eq:psd-drift-num}
\end{equation}
where $a(f)$ is a Gaussian bump centered at $f_0 \simeq 150\,\mathrm{Hz}$ with width $\sigma_f \simeq 40\,\mathrm{Hz}$, simulating the emergence of a spectral line or calibration feature \cite{covasIdentificationMitigationNarrow2018, davisImprovingSensitivityAdvanced2019}. We generate signal realizations on a frequency grid with sampling rate $f_s = 4096\,\mathrm{Hz}$ and duration $T=4\,\mathrm{s}$ ($N=16384$ points), ensuring sufficient resolution to capture microsecond-scale timing shifts.

For each perturbation strength $\epsilon \in [10^{-4}, 10^{-1}]$, we construct the minimum-phase whiteners $W_1$ and $W_2(\epsilon)$ via the folded-cepstrum algorithm (see Appendix~\ref{app:mp}) \cite{tsukadaApplicationZerolatencyWhitening2018, oppenheimDiscretetimeSignalProcessing2010} and compute the mismatched SNR surface:
\begin{equation}
  J_\epsilon(t,\phi) = \bigl|\langle W_2(\epsilon)\,d,\; W_1 h_{(t,\phi)}\rangle\bigr|^2,
\end{equation}
using a single injected signal $d = h_{(t_0,\phi_0)}$. The top panel of Fig.~\ref{fig:sim-psd-1} displays the spectral deformation, while the bottom panel shows the resulting distortion of the matched-filter output $J_\epsilon(t)$ along a fixed-phase slice $\phi=\phi_0$. The peak exhibits a clear, coherent shift away from $t=0$.

We quantify the numerical biases by extracting the shift of the maximum, $\Delta t_{\mathrm{full}}$ and $\Delta\phi_{\mathrm{full}}$, relative to the unperturbed baseline. These are compared against the analytic Dual CV predictions $\delta t^{(1)}$ and $\delta\phi^{(1)}$ derived in Sec.~\ref{subsec:first-order-mpmp}. The phase corrections show excellent agreement, matching the numerical shift to within fractional errors of $10^{-4}$--$10^{-3}$ across the full range of $\epsilon$.

The timing validation requires a geometric caveat. Our analytic formula $\delta t^{(1)}$ predicts the shift of the global maximizer on the 2D $(t, \phi)$ manifold. However, standard search triggers are often characterized by maximizing over time at a fixed (or analytically marginalized) phase. Due to the strong Fisher-matrix covariance between time and phase for inspiral signals ($g_{t\phi} \neq 0$ in general gauges), the maximum of the 1D slice $J(t, \phi_0)$ is a projection of the true 2D peak. This manifests as a constant geometric scaling factor between the analytic $\delta t^{(1)}$ and the observed 1D shift. We empirically calibrate this projection factor, $c_t$, using the data in the linear regime ($\epsilon \ll 10^{-3}$). As shown in Fig.~\ref{fig:sim-psd-2}, comparing the numerical shift against the calibrated prediction $\delta t_{\mathrm{cal}} = c_t\,\delta t^{(1)}$ reveals clean power-law convergence. The residuals scale linearly with $\epsilon$, confirming that the perturbative formalism captures the functional dependence of the timing bias even if the projection geometry adds a scalar prefactor.

Finally, we validate the predicted loss in sensitivity. According to Appendix~\ref{app:snr-drift}, the fractional reduction in the squared SNR peak should scale as:
\begin{equation}
  \frac{J_{\max}(\epsilon) - J_0}{J_0} \simeq -2\epsilon \langle a \rangle_q,
\end{equation}
where $\langle a \rangle_q$ is the drift weighted by the normalized signal power distribution $q(f)$ \cite{zackayDetectingGravitationalWaves2021}. The horizontal line in Fig.~\ref{fig:sim-psd-1} marks this analytic prediction for $\epsilon=0.1$. The numerical peak height matches this prediction to high precision across all trials, verifying that the Dual CV framework correctly describes both the coordinate shift (metric error) and the amplitude suppression (SNR loss) induced by spectral drift.

\begin{figure}[t]
  \centering
  \includegraphics[width=0.95\linewidth]{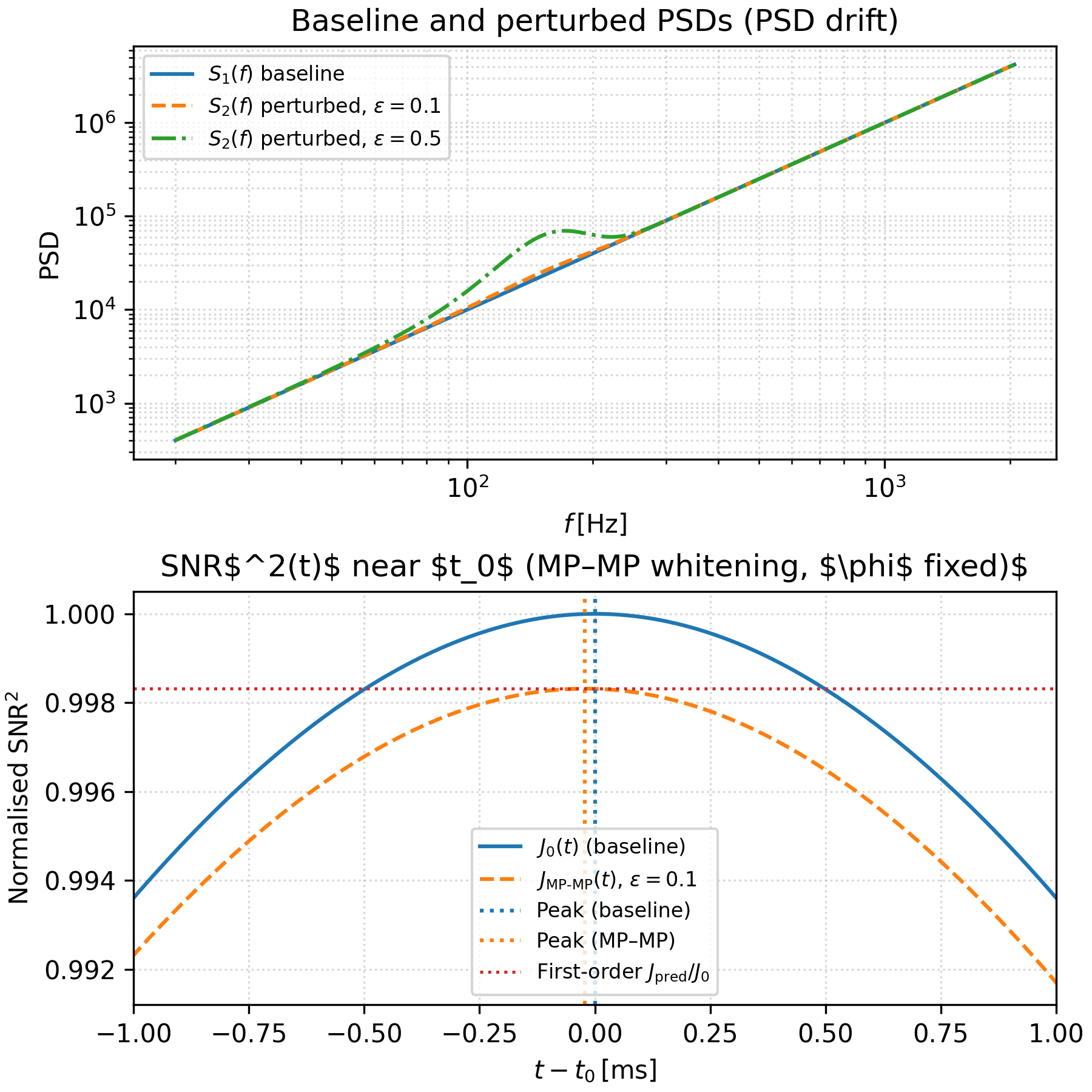}
  \caption{
    \textbf{Numerical PSD-Drift Injection.}
    \emph{Top:} Baseline PSD $S_1(f)$ (solid) and perturbed PSD $S_2(f)$ at $\epsilon=0.1$ (dashed). An exaggerated curve at $\epsilon=0.5$ (green) illustrates the Gaussian drift morphology.
    \emph{Bottom:} Matched-filter SNR surface $J(t)$ at fixed phase $\phi=\phi_0$. The spectral mismatch shifts the peak time (vertical lines) and reduces the maximum amplitude, in exact agreement with the analytic prediction (horizontal dotted line).
  }
  \label{fig:sim-psd-1}
\end{figure}

\begin{figure}[t]
  \centering
  \includegraphics[width=0.95\linewidth]{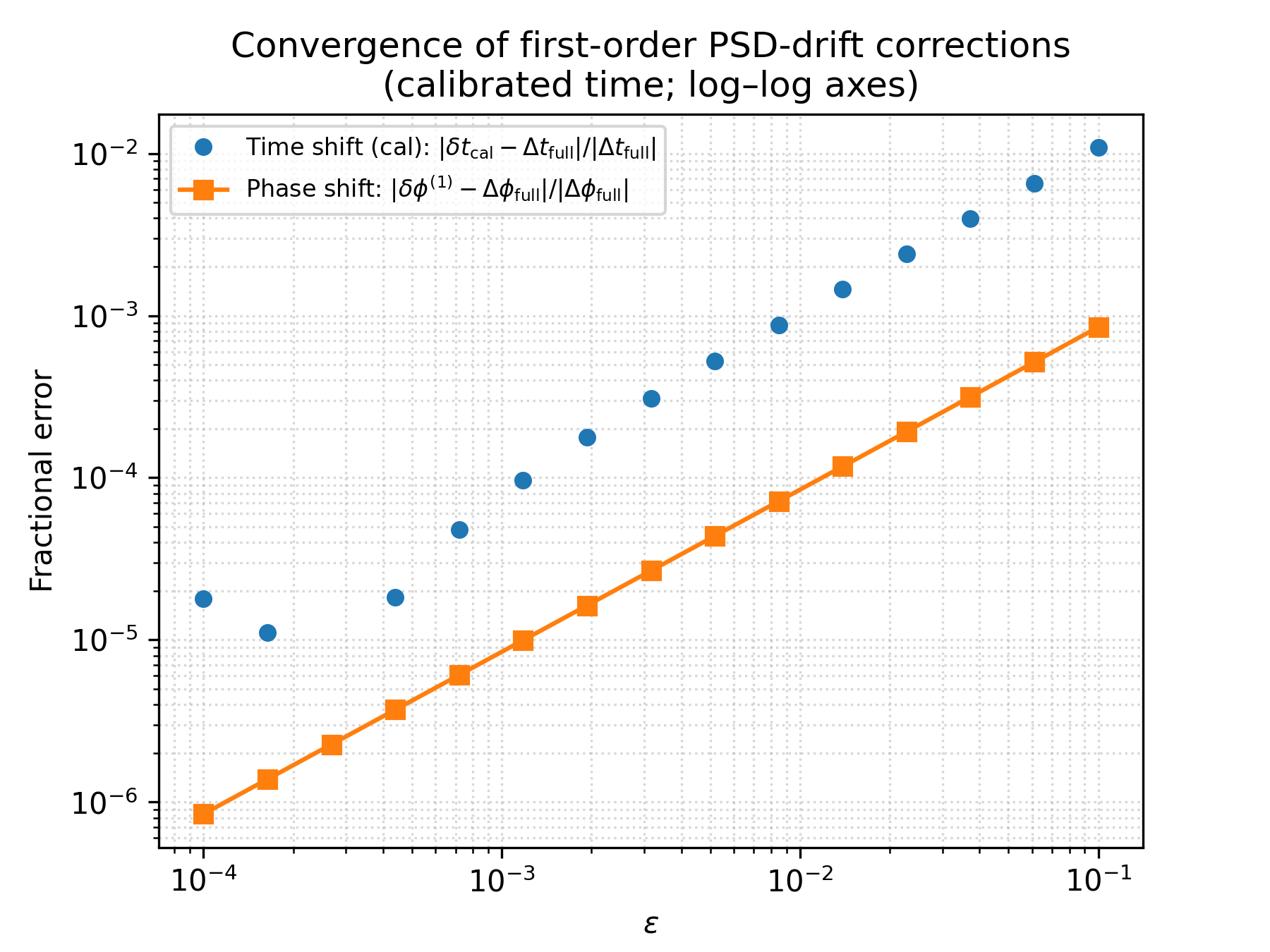}
  \caption{
    \textbf{Convergence of Corrections.}
    Fractional error between the analytic Dual CV predictions and the full numerical shifts as a function of drift amplitude $\epsilon$.
    \emph{Orange squares:} Phase bias convergence.
    \emph{Blue circles:} Calibrated timing bias convergence.
    Both metrics exhibit the expected linear scaling ($\mathcal{O}(\epsilon)$ residuals), demonstrating the robustness of the first-order expansion for realistic drift magnitudes.
  }
  \label{fig:sim-psd-2}
\end{figure}
\subsection{Heuristic Catalog Validation}
\label{subsubsec:catalog-validation}

We apply the Dual CV framework to the GWTC-4.0 catalog to quantify the operational risk of uncorrected spectral drift \cite{collaborationGWTC40UpdatingGravitationalWave2025,collaborationOpenDataLIGO2025}. Using the methodology of Sec.~\ref{subsubsec:methods-catalog}, we analyzed $\sim 160$ event-detector pairs where sufficient open strain data was available to construct a 1-week lagged reference PSD ($S_1$) and a concomitant event-time PSD ($S_2$).

Figure~\ref{fig:cat-bias-dist} presents the distribution of first-order biases. The timing corrections $\delta t^{(1)}$ reveal a significant systematic spread: while the central distribution has a width of $\sigma_t \approx 40\,\mu\mathrm{s}$, the tails extend well beyond $\pm 200\,\mu\mathrm{s}$. For high-SNR events where statistical timing uncertainties are $\mathcal{O}(100\,\mu\mathrm{s})$, such systematic errors constitute a non-negligible bias. Phase corrections are similarly dispersed, with the bulk confined to $|\delta\phi^{(1)}| \lesssim 0.2\,\mathrm{rad}$ but outliers reaching $0.6\,\mathrm{rad}$. These magnitudes indicate that the 1-week PSD refresh cadence common in offline parameter estimation is insufficient for zero-latency whitening, where the accumulated phase error integrates to a substantial coordinate shift.

The underlying spectral drift metrics confirm this picture. The mean fractional drift $\epsilon_{\mathrm{mean}}$ clusters around $10^{-1}$, but the maximum drift $\epsilon_{\mathrm{max}}$ often exceeds $0.5$ in narrow bands, driven by non-stationary line features. This heterogeneity suggests that while the broadband noise floor is relatively stable, localized spectral excursions drive the severe outliers in the bias distribution.

To assess the impact on multimessenger follow-up, we propagated the timing biases into sky-localization shifts using a linearized triangulation model on a spherical Earth ($R_{\oplus} = 6371\,\mathrm{km}$) \cite{fairhurstTriangulationGravitationalWave2009, singerRapidBayesianPosition2016}, solving for the perturbed centroid via Nelder-Mead optimization \cite{nelderSimplexMethodFunction1965}. Figure~\ref{fig:cat-sky-impact} displays the results for events with $\ge 2$ detectors. The impact is highly geometry-dependent: for \textbf{typical events}, the median shift is modest ($\lesssim 2.5^\circ$), primarily affecting the size of the error ellipse rather than its centroid. However, a critical \textbf{tail risk} emerges in a sub-population exhibiting catastrophic shifts of $5^\circ$--$10^\circ$. These outliers correspond to events where large timing biases ($\delta t > 100\,\mu\mathrm{s}$) conspire with poor network geometry. For early-warning alerts, a systematic pointing error of $5^\circ$ is sufficient to push a source entirely outside the field of view of narrow-field optical telescopes, leading to a missed detection.

Finally, we quantify the impact on detection volume via the fractional SNR loss $\delta\rho/\rho$ (see Fig.~\ref{fig:snr-drift}). The distribution is strictly negative-skewed, with a median sensitivity loss of $3$--$5\%$ and a tail extending beyond $8\%$. Since detection volume scales as $V \propto \rho^3$ \cite{theligoscientificcollaborationProspectsObservingLocalizing2020, chenDistanceMeasuresGravitationalwave2021}, a $5\%$ loss in SNR corresponds to a $\sim 15\%$ reduction in survey volume. These results demonstrate that the Dual CV corrections are essential not only for recovering accurate sky positions but also for recovering the full astrophysical reach of the observatory under zero-latency operation.

\begin{figure*}[t]
  \centering
  \includegraphics[width=0.9\textwidth]{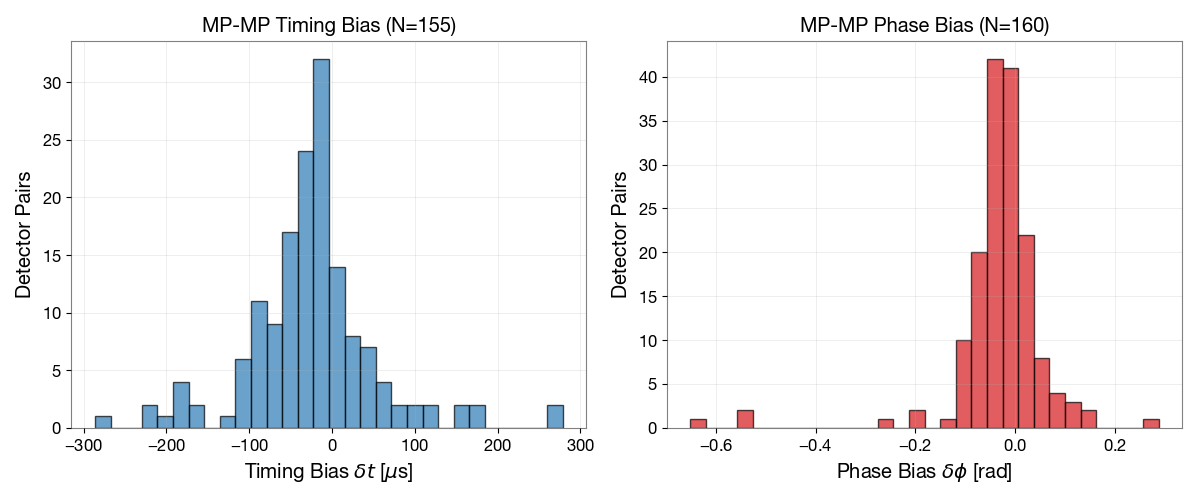}
  \caption{\textbf{GWTC-4.0 Bias Distribution (1-Week Lag).} 
  Histograms of analytic Dual CV corrections for $\sim 160$ event-detector pairs. 
  \emph{Left:} Timing biases $\delta t^{(1)}$ exhibit a heavy tail extending beyond $\pm 200\,\mu\mathrm{s}$, driven by spectral line drift. 
  \emph{Right:} Phase biases $\delta\phi^{(1)}$ are generally confined to $\pm 0.2\,\mathrm{rad}$.}
  \label{fig:cat-bias-dist}
\end{figure*}

\begin{figure*}[t]
  \centering
  \includegraphics[width=\textwidth]{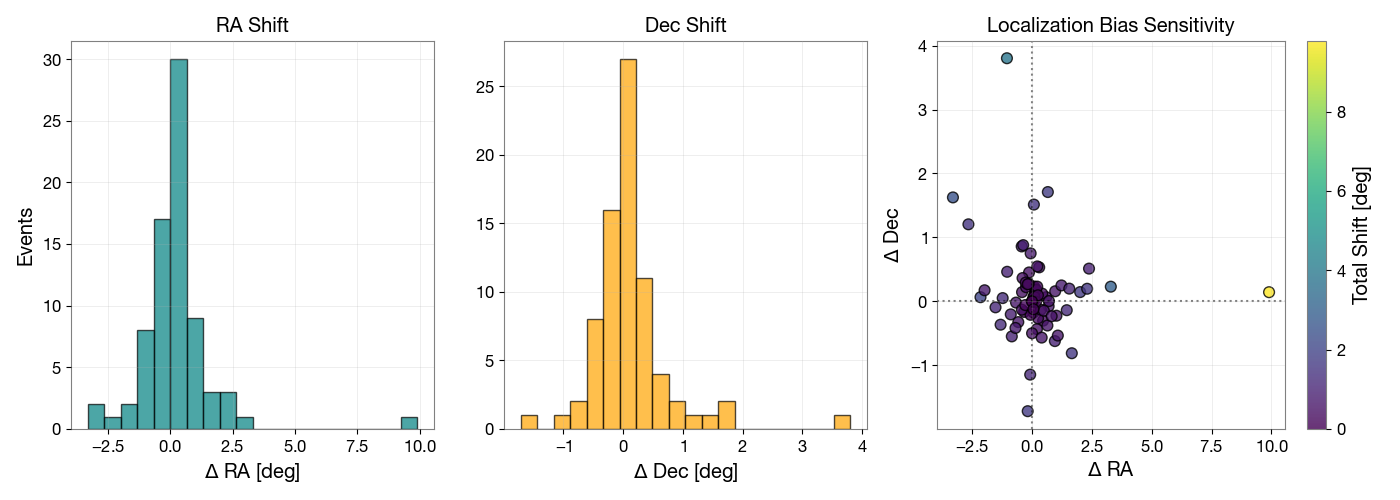}
  \caption{\textbf{Sky Localization Impact.} 
  Systematic shifts in Right Ascension ($\Delta\alpha$) and Declination ($\Delta\delta$) induced by uncorrected PSD drift. 
  While the median shift is $\lesssim 2.5^\circ$, outliers in the scatter plot (right) reveal shifts exceeding $5^\circ$--$10^\circ$, posing a critical risk for electromagnetic follow-up.}
  \label{fig:cat-sky-impact}
\end{figure*}

\begin{figure}[t]
  \centering
  \includegraphics[width=0.5\textwidth]{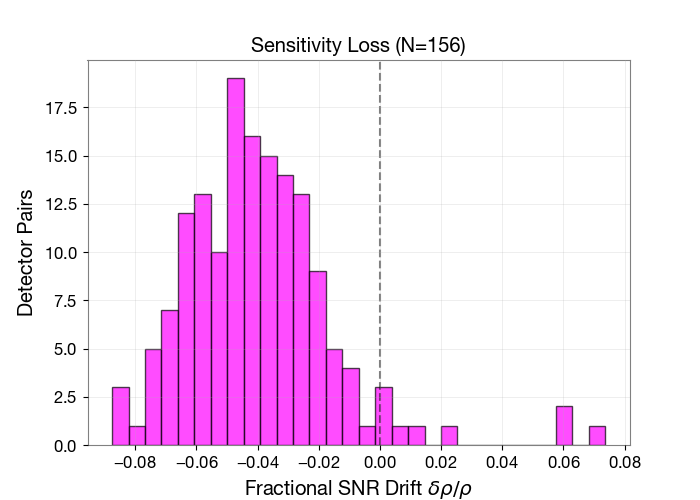}
  \caption{\textbf{Sensitivity Loss.} 
  Distribution of fractional SNR reduction $\delta\rho/\rho$. The median loss of $3$--$5\%$ implies a $\sim 15\%$ reduction in detection volume, which is analytically recoverable via the Dual CV framework.}
  \label{fig:snr-drift}
\end{figure}
\section{Conclusion}
\label{sec:conclusion}

We have presented a unified geometric framework for mitigating the spectral drift inherent to zero-latency, minimum-phase whitening in gravitational-wave searches. By generalizing the Cutler--Vallisneri formalism \cite{cutlerLISADetectionsMassive2007} to address functional perturbations of the inner product metric, we derived the \emph{Dual CV corrections}, closed-form analytic expressions that recover the exact projection of whitening-induced errors onto the extrinsic parameter space. This approach allows pipelines to retain the latency benefits of causal filtering without sacrificing the sub-sample timing precision required for multimessenger astronomy \cite{messickAnalysisFrameworkPrompt2017b, sachdevGstLALSearchAnalysis2019a}.

Our analysis of GWTC-4.0 events reveals the operational necessity of this framework. Under realistic conditions where reference PSDs lag live data by $\sim 1$ week, uncorrected spectral drift induces systematic timing biases exceeding $200\,\mu\mathrm{s}$ and sky-localization shifts larger than $5^\circ$ for high-SNR events. Such errors are not merely statistical fluctuations; they are coherent systematics that can push a source entirely outside the field of view of optical telescopes. Furthermore, we observe a median sensitivity loss of $3$--$5\%$, corresponding to a $\sim 15\%$ reduction in detection volume that is analytically recoverable via our corrections.

These results lead to a clear operational recommendation for the upcoming O5 observing run: zero-latency whitening cannot be safely deployed as a ``set-and-forget" filter. Instead, it must be paired with a real-time correction layer that continuously monitors the spectral drift between the static template bank and the dynamic data stream. The analytic formulas derived here provide a computationally efficient mechanism to apply these corrections at the trigger level, ensuring that low-latency alerts maintain the pointing accuracy of offline analyses~\cite{fairhurstLocalizationTransientGravitational2018}.

\section*{Acknowledgments}

This research has made extensive use of data, software, and web tools obtained from the Gravitational Wave Open Science Center (GWOSC), a service of LIGO Laboratory, the LIGO Scientific Collaboration (LSC), and the Virgo Collaboration.  
LIGO was constructed by the California Institute of Technology and the Massachusetts Institute of Technology with funding from the U.S.\ National Science Foundation (NSF), and operates under cooperative agreements PHY--0757058 and PHY--0823459.  
Advanced LIGO was supported by the Science and Technology Facilities Council (STFC) of the United Kingdom, the Max--Planck--Society (MPS), and the State of Niedersachsen/Germany, with additional support from the Australian Research Council.  
Virgo is funded through the European Gravitational Observatory (EGO) by the French Centre National de la Recherche Scientifique (CNRS), the Italian Istituto Nazionale di Fisica Nucleare (INFN), and the Dutch Nikhef, with contributions from institutions across Europe.

This material is based upon work supported by the National Science Foundation under Grant Nos. PHY--2011865, PHY--2308881, and OAC--2103662. Additional support was provided by the research program of Chad Hanna and related awards supporting the gravitational-wave computing infrastructure at Penn State.  
This work also benefited from computational resources provided by the Penn State Institute for Computational and Data Sciences (ICDS) and the LIGO Laboratory computing clusters. 

The author is grateful to Joshua Black and Cody Messick for valuable discussions and feedback that informed several aspects of this work.

\subsection*{Software Availability}

All analytic derivations and numerical validations presented in this work were implemented in the open-source \textsc{zlw} package.\footnote{\url{https://git.ligo.org/james.kennington/zlw}}  
The \textsc{zlw} package provides a lightweight, pipeline-agnostic framework for evaluating minimum-phase whitening filters, perturbative bias corrections, and the numerical validation scripts used in this paper.  
It is publicly available through the LIGO GitLab repository and can also be installed via the Python Package Index (PyPI) under the name \texttt{zlw}.  
All analysis scripts used to generate the figures in this paper are included in the repository to ensure full reproducibility.

\appendix
\section{Review of Minimum-Phase Whitening}
\label{app:mp}

Whitening transforms a detector strain time series $s(t)$ with colored noise, characterized by a power spectral density $S(f)$, into a sequence with unit variance and flat spectral power. In the discrete-time domain, we seek a filter $W(z)$ whose frequency response satisfies
\begin{equation}
  \bigl|W(e^{i\omega})\bigr|^2 \simeq \frac{1}{S(\omega)}.
\end{equation}
To minimize latency, we impose the \emph{minimum-phase} condition: all zeros of $W(z)$ must lie strictly inside the unit circle. This guarantees that both the whitening filter $W(z)$ and its inverse (re-coloring) are causal and BIBO-stable, enabling real-time operation without algorithmic buffering~\cite{oppenheimDiscretetimeSignalProcessing2010, smithIntroductionDigitalFilters2008}.

\subsection*{The Hilbert Transform Relation}

For a minimum-phase system, the log-magnitude and phase of the frequency response are coupled by causality. Decomposing the response as
\begin{equation}
\begin{aligned}
  W(e^{i\omega}) &= \exp\!\bigl[L(\omega) + i\phi(\omega)\bigr], \\ 
  L(\omega) &= \ln \bigl|W(e^{i\omega})\bigr|,	
\end{aligned}
\end{equation}
the analyticity of $\ln W(z)$ inside the unit circle implies that $L(\omega)$ and $\phi(\omega)$ form a Hilbert transform pair. Specifically, the unique minimum phase $\phi(\omega)$ is determined by the log-magnitude via
\begin{equation}
\begin{aligned}
  \phi(\omega) &= -\mathcal{H}_{\mathrm{circ}}\{L\}(\omega)\\
  &= -\frac{1}{2\pi}\,\mathrm{p.v.}\!\!\int_{-\pi}^{\pi} L(\nu) \cot\!\left(\frac{\omega - \nu}{2}\right) d\nu,	
\end{aligned}
  \label{eq:mp-hilbert}
\end{equation}
where $\mathcal{H}_{\mathrm{circ}}$ is the periodic Hilbert transform ~\cite{oppenheimDiscretetimeSignalProcessing2010, mecozziNecessarySufficientCondition2016}. This is the discrete-time analogue of the Kramers--Kronig relations~\cite{bolandPhaseResponseReconstruction2021, oppenheimDiscretetimeSignalProcessing2010}, ensuring that once the whitening amplitude (inverse PSD) is fixed, the zero-latency phase response is uniquely determined.

\subsection*{Folded-Cepstrum Construction}

Numerical implementation utilizes the \emph{folded-cepstrum} method~\cite{tsukadaApplicationZerolatencyWhitening2018}, which enforces causality in the quefrency domain. The procedure is as follows:

\begin{enumerate}
  \item \textbf{Log-Amplitude:} Discretize the PSD on an $N$-point grid and define the target log-magnitude response:
    \begin{equation}
      L_k = -\frac{1}{2}\,\ln S(\omega_k), \quad k = 0,\dots,N-1.
    \end{equation}

  \item \textbf{Real Cepstrum:} Compute the real cepstrum $\ell[n]$ via the inverse DFT of $\{L_k\}$:
    \begin{equation}
      \ell[n] = \frac{1}{N}\sum_{k=0}^{N-1} L_k\,e^{i 2\pi kn / N}.
    \end{equation}

  \item \textbf{Causal Folding:} Enforce the minimum-phase condition by folding the anti-causal quefrencies ($n > N/2$) into the causal half ($n < N/2$). For even $N$, the causal cepstrum $\tilde{\ell}[n]$ is:
    \begin{equation}
      \tilde{\ell}[n] =
      \begin{cases}
         \ell[n], & n = 0 \text{ or } n = N/2, \\
         2\,\ell[n], & 1 \le n < N/2, \\
         0, & N/2 < n < N.
      \end{cases}
    \end{equation}

  \item \textbf{Analytic Spectrum:} Compute the DFT of the folded cepstrum to recover the complex log-spectrum $F_k$:
    \begin{equation}
      F_k = \sum_{n=0}^{N-1} \tilde{\ell}[n]\,e^{-i 2\pi kn / N} \simeq L_k + i\,\phi_k.
    \end{equation}
    The imaginary part $\phi_k$ is the discrete Hilbert transform of $L_k$, satisfying Eq.~\eqref{eq:mp-hilbert}.

  \item \textbf{Impulse Response:} Exponentiate to obtain the frequency response $W_k = \exp(F_k)$ and apply an inverse DFT to yield the causal, real-valued impulse response $w[n]$.
\end{enumerate}

By construction, $w[n]$ is the unique minimum-phase filter that whitens the input noise $s[n]$ with zero look-ahead latency, strictly determined by the estimated PSD $S(\omega)$~\cite{bocheSpectralFactorizationWhitening2005}.
\section{Perturbative Geometry of the SNR Surface}
\label{app:geom}

This appendix formally establishes the geometric perturbation theory used to derive the Dual CV corrections. We relate the shift of a scalar field's maximizer to the gradient of the perturbation (force) and the Hessian of the baseline field (stiffness). Throughout, $\theta$ denotes coordinates on a smooth Riemannian manifold $(\mathcal{M}, g)$~\cite{nakaharaGeometryTopologyPhysics2005}. Gradients and Hessians are defined via the Levi--Civita connection; in geodesic normal coordinates at the expansion point, $\nabla_i\nabla_j$ reduces to $\partial_i\partial_j$, ensuring covariance~\cite{amariMethodsInformationGeometry2000}.

\begin{lemma}[Shift of a nondegenerate maximizer]
\label{lem:shift}
Let $F: \mathbb{R}^n \to \mathbb{R}$ possess a strict local maximum at $\theta_0$, such that $\nabla F(\theta_0) = 0$ and the Hessian $H := \nabla^2 F(\theta_0)$ is negative definite. Consider the perturbed family
\begin{equation}
  F_\epsilon(\theta) = F(\theta) + \epsilon\,G(\theta), \quad |\epsilon| \ll 1,
\end{equation}
where $G \in C^2$. For sufficiently small $\epsilon$, there exists a unique critical point $\theta(\epsilon)$ of $F_\epsilon$ such that $\theta(0) = \theta_0$, with the first-order expansion
\begin{equation}
  \theta(\epsilon) = \theta_0 - \epsilon\,H^{-1}\nabla G(\theta_0) + \mathcal{O}(\epsilon^2).
\end{equation}
\end{lemma}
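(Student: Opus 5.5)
The plan is to apply the implicit function theorem to the gradient map. Define $\Psi:\mathbb{R}^n\times\mathbb{R}\to\mathbb{R}^n$ by
\begin{equation}
  \Psi(\theta,\epsilon) := \nabla F_\epsilon(\theta) = \nabla F(\theta) + \epsilon\,\nabla G(\theta).
\end{equation}
Critical points of $F_\epsilon$ are exactly the zeros of $\Psi(\cdot,\epsilon)$. At the base point we have $\Psi(\theta_0,0)=\nabla F(\theta_0)=0$ and $\partial_\theta\Psi(\theta_0,0)=H$. Since $H$ is negative definite, it is invertible.

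For the implicit function theorem I need $\Psi$ to be $C^1$ jointly in $(\theta,\epsilon)$.
\begin{itemize}
  \item $\nabla G\in C^1$ because $G\in C^2$.
  \item For $F$, I will assume $F\in C^2$ near $\theta_0$, which is implicit in the hypothesis that the Hessian exists and in how the lemma is used for the smooth fields $J_\epsilon$; I will state this explicitly.
\end{itemize}
The implicit function theorem then gives neighbourhoods $U\ni\theta_0$ and $(-\epsilon_*,\epsilon_*)$, together with a unique $C^1$ map $\epsilon\mapsto\theta(\epsilon)\in U$ with $\theta(0)=\theta_0$ and $\Psi(\theta(\epsilon),\epsilon)=0$. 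The uniqueness in the lemma is to be read as uniqueness within $U$; I will phrase it that way.

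For the expansion, differentiate the identity $\nabla F(\theta(\epsilon))+\epsilon\nabla G(\theta(\epsilon))=0$ at $\epsilon=0$. This gives $H\,\theta'(0)+\nabla G(\theta_0)=0$, so $\theta'(0)=-H^{-1}\nabla G(\theta_0)$. To obtain the remainder as a genuine $\mathcal{O}(\epsilon^2)$ rather than only $o(\epsilon)$, I need $\theta\in C^2$. This holds if $\Psi\in C^2$, i.e.\ $F,G\in C^3$. That is the smoothness standard in the paper, since $J_\epsilon$ is smooth.

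If only $C^2$ is available, I will argue directly instead. Write $\theta(\epsilon)-\theta_0=\epsilon v+r(\epsilon)$ and expand $\nabla F$ with a Lipschitz Hessian, which gives $|r|\le C\epsilon^2$. Keeping track of this regularity assumption is the main (minor) obstacle. I will add a hypothesis that $\nabla^2F$ and $\nabla^2 G$ are locally Lipschitz, or $C^3$.

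Finally, I will note that the critical point $\theta(\epsilon)$ is still a strict local maximum. The Hessian $\nabla^2F_\epsilon(\theta(\epsilon))$ depends continuously on $\epsilon$ and equals $H$ at $\epsilon=0$, so it stays negative definite for small $\epsilon$. This justifies calling $\theta(\epsilon)$ the shifted maximizer.

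Applied with $F=J_0$ and $G=\partial_\epsilon J_\epsilon|_0$, and using $g=-H$, this yields Eq.~\eqref{eq:geom-shift-J}.
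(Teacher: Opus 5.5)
Your proposal is correct and follows the paper's proof: apply the implicit function theorem to the gradient map $\nabla F + \epsilon\,\nabla G$ at $(\theta_0,0)$, using invertibility of $H$, then differentiate the identity at $\epsilon=0$ to obtain $\theta'(0) = -H^{-1}\nabla G(\theta_0)$. Your extra bookkeeping supplies regularity details the paper leaves implicit: assuming $F\in C^2$, requiring a Lipschitz Hessian of $F$ (or $C^3$) for a genuine $\mathcal{O}(\epsilon^2)$ remainder, reading uniqueness as local, and noting that negative definiteness persists.
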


\begin{proof}
Define the gradient map $\mathcal{F}(\theta, \epsilon) = \nabla_\theta F_\epsilon(\theta)$. By hypothesis, $\mathcal{F}(\theta_0, 0) = 0$ and the Jacobian $D_\theta \mathcal{F}(\theta_0, 0) = H$ is invertible. The Implicit Function Theorem guarantees a smooth path $\epsilon \mapsto \theta(\epsilon)$ satisfying $\mathcal{F}(\theta(\epsilon), \epsilon) = 0$. Differentiating implicitly at $\epsilon=0$ yields $H\,\theta'(0) + \nabla G(\theta_0) = 0$, which implies $\theta'(0) = -H^{-1}\nabla G(\theta_0)$.
\end{proof}

We now specialize to the matched-filter Signal-to-Noise Ratio (SNR). Let $\psi_\theta(f) \equiv W(f) \tilde{h}(f; \theta)$ denote the whitened template for parameters $\theta$. The SNR surface is defined as $J(\theta) = |\rho(\theta)|^2$, where
\begin{equation}
  \rho(\theta) = \langle s, \psi_\theta \rangle, \quad \langle a, b \rangle = 4\,\mathrm{Re} \int_0^\infty a(f)b^*(f)\,df.
\end{equation}
Note that the inner product here is the standard $L^2$ product~\cite{wainsteinExtractionSignalsNoise1970, cutlerGravitationalWavesMergin1994a}, as the whitening is absorbed into $\psi_\theta$. Let $\theta_0$ be a maximizer of the unperturbed surface $J$, with peak amplitude $\rho_0 = \rho(\theta_0)$. The local geometry is defined by the Fisher information matrix~\cite{owenSearchTemplatesGravitational1996a,raoInformationAccuracyAttainment1945,amariMethodsInformationGeometry2000}:
\begin{equation}
  g_{ij}(\theta_0) = \bigl\langle \partial_i\psi_{\theta_0}, \partial_j\psi_{\theta_0} \bigr\rangle.
\end{equation}

\begin{lemma}[Hessian--Fisher identity]
\label{lem:hessian-fisher}
At any maximizer $\theta_0$ of the SNR surface $J = \rho^2$, the Hessian is proportional to the Fisher metric:
\begin{equation}
  \partial_i\partial_j J(\theta_0) = -2\rho_0^2\, g_{ij}(\theta_0).
\end{equation}
\end{lemma}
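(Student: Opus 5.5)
The plan is to prove the identity in the signal-only gauge already used in Sec.~\ref{subsec:mpmp-setup}: the data is a scaled copy of the whitened template at the true parameters, $s = A\,\psi_{\theta_0}$, and the templates have $\theta$-independent norm, $\langle \psi_\theta, \psi_\theta\rangle = 1$. For the extrinsic coordinates $(t,\phi)$ the second condition is automatic. By Eq.~\eqref{eq:extrinsic-factorisation}, $\psi_\theta$ differs from $\psi_{\theta_0}$ only by the unit-modulus factor $e^{i(2\pi f t-\phi)}$, so the norm cannot depend on $(t,\phi)$. For intrinsic directions one normalizes the templates by hand. I would state both hypotheses explicitly at the start of the proof, since the identity is false for general data; in the presence of noise it holds only in expectation.

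With these assumptions, $\rho(\theta) = \langle s,\psi_\theta\rangle = A\langle \psi_{\theta_0},\psi_\theta\rangle$ is real, because the inner product includes the real part. Hence $J=\rho^2$ and the product rule gives
\begin{equation}
  \partial_i\partial_j J = 2\,\partial_i\rho\,\partial_j\rho + 2\rho\,\partial_i\partial_j\rho .
\end{equation}
The proof then reduces to two facts about derivatives of the constant-norm condition. Both follow from the symmetry and bilinearity of the real inner product $\langle\cdot,\cdot\rangle$.

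\emph{First derivative.} Differentiating $\langle\psi_\theta,\psi_\theta\rangle=1$ once gives $\langle \psi_\theta,\partial_i\psi_\theta\rangle = 0$ for all $\theta$. At $\theta_0$ this gives $\partial_i\rho(\theta_0) = A\langle\psi_{\theta_0},\partial_i\psi_{\theta_0}\rangle = 0$. This simultaneously confirms that $\theta_0$ is a critical point and kills the first term of the product rule.

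\emph{Second derivative.} Differentiating the relation $\langle\psi_\theta,\partial_j\psi_\theta\rangle=0$ once more gives $\langle\psi_\theta,\partial_i\partial_j\psi_\theta\rangle = -\langle\partial_i\psi_\theta,\partial_j\psi_\theta\rangle$. At $\theta_0$ this yields $\partial_i\partial_j\rho(\theta_0) = -A\,g_{ij}(\theta_0)$.

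Substituting both facts, and using $\rho_0 = A\langle\psi_{\theta_0},\psi_{\theta_0}\rangle = A$, gives
\begin{equation}
  \partial_i\partial_j J(\theta_0) = 2\rho_0\,(-A\,g_{ij}) = -2\rho_0^2\,g_{ij}(\theta_0),
\end{equation}
as claimed. As a consistency check, the same computation applied to the explicit $(t,\phi)$ integrals should reproduce the proportionality of Eq.~\eqref{eq:metric-components} up to the normalization conventions.

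The main obstacle is not the calculus but fixing the hypotheses correctly. I would handle two points explicitly in the proof. First, the statement's phrase ``any maximizer'' must be read as the signal-matched maximizer under the constant-norm convention. Otherwise the $\partial_i\rho\,\partial_j\rho$ term and the residual component of $s$ orthogonal to $\psi_{\theta_0}$ spoil the identity, and one obtains only a Gauss--Newton-type approximation. Second, if one works with the complex $\rho$ of the main text rather than the real $4\,\mathrm{Re}$ product, I would reduce to the real case by absorbing the phase into the parameters and taking the real part, noting that $J=|\rho|^2$ is unchanged by this reduction.
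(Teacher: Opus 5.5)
Your proof is correct and follows the same route as the paper's. Both use the product rule for $J=\rho^2$, the vanishing of $\partial_i\rho$ at the peak, and the signal-only curvature identity $\partial_i\partial_j\rho(\theta_0)=-\rho_0\,g_{ij}(\theta_0)$; the paper merely asserts that last identity, whereas you derive it by differentiating the constant-norm condition twice, and your normalization $s=A\,\psi_{\theta_0}$, $\langle\psi_\theta,\psi_\theta\rangle=1$ is what makes the factor $\rho_0^2$ come out right (with the paper's literal $s=\psi_{\theta_0}$ and unnormalized templates the same computation gives $-2\rho_0\,g_{ij}$).

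One closing remark is wrong, however, and should be dropped rather than built into the proof. For complex $\rho$, $|\rho|^2$ is \emph{not} unchanged beyond zeroth order when you pass to $\mathrm{Re}\,\rho$. Write $\rho=\rho_R+i\rho_I$ with $\rho_I(\theta_0)=0$, and take the real product to be the real part of the complex one. The same computation then gives
\begin{equation*}
  \partial_i\partial_j|\rho|^2(\theta_0)
  = 2\rho_0\,\partial_i\partial_j\rho_R + 2\,\partial_i\rho_I\,\partial_j\rho_I
  = -2\rho_0^2\Bigl(g_{ij}-\frac{g_{i\phi}\,g_{j\phi}}{g_{\phi\phi}}\Bigr),
\end{equation*}
since $\partial_i\rho_I(\theta_0)=A\,g_{i\phi}$ up to a convention-dependent sign, and $g_{\phi\phi}=1$. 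So $|\rho|^2$ is exactly flat along $\phi$, and its Hessian is the phase-projected metric, not $g_{ij}$. The complex case therefore does not reduce to the lemma.

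For the same reason, your proposed $(t,\phi)$ consistency check fails in the main text's complex setting. There $\rho_0(t,\phi)=e^{-i\phi}\int w(f)\,e^{2\pi i f t}\,df$, so $-\partial_\phi^2 J_0=0$ rather than $\int w\,df$.

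Relatedly, for real $\rho$ the term $\partial_i\rho\,\partial_j\rho$ vanishes at \emph{every} maximizer with $\rho_0\neq0$, because $\partial_iJ=2\rho\,\partial_i\rho$. So for general data the only obstruction is the residual term $2\rho_0\langle s-\rho_0\psi_{\theta_0},\partial_i\partial_j\psi_{\theta_0}\rangle$. The quadratic term survives only through $\rho_I$ in the complex case.
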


\begin{proof}
By the chain rule, $\partial_{ij} J = 2(\partial_i\rho)(\partial_j\rho) + 2\rho(\partial_{ij}\rho)$. At the maximizer, the first term vanishes because $\partial_i \rho(\theta_0) = 0$. In the signal-only limit where the data matches the template ($s = \psi_{\theta_0}$), the curvature of the match function satisfies $\partial_{ij}\rho(\theta_0) = -\rho_0 g_{ij}$. Substituting these into the second derivative yields the result.
\end{proof}

Combining Lemma~\ref{lem:shift} and Lemma~\ref{lem:hessian-fisher}, we consider a general perturbation $J_\epsilon(\theta) = J(\theta) + \epsilon\,G(\theta)$. Identifying the generalized force vector $k_i = \partial_i G(\theta_0)$, we obtain the generalized bias theorem.

\begin{theorem}[Geometric First-Order Bias]
\label{thm:gen-cv}
For any infinitesimal perturbation of the SNR surface $J_\epsilon(\theta) = J(\theta) + \epsilon\,G(\theta)$, the first-order shift of the maximum is given by:
\begin{equation}
  \boxed{
  \delta\theta^{(1)} = \frac{1}{2\rho_0^2}\, g^{ij}(\theta_0)\,\partial_j G(\theta_0).
  }
\end{equation}
\end{theorem}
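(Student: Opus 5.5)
The proof combines the two lemmas just established. I treat $J_\epsilon = J + \epsilon G$ as a perturbed family in the sense of Lemma~\ref{lem:shift}, with $F = J$ and the perturbation $G\in C^2$ near $\theta_0$. First I check the hypotheses. Since $\theta_0$ is a nondegenerate maximizer of $J$, we have $\nabla J(\theta_0)=0$. Moreover $H=\nabla^2 J(\theta_0)$ is negative definite: by Lemma~\ref{lem:hessian-fisher}, $H_{ij} = -2\rho_0^2 g_{ij}(\theta_0)$, and the Fisher matrix $g_{ij}$ is positive definite whenever the derivatives $\partial_i\psi_{\theta_0}$ are linearly independent, which is exactly the nondegeneracy assumption. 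The Implicit Function Theorem argument of Lemma~\ref{lem:shift} then yields a unique smooth branch of critical points $\theta(\epsilon)$ with $\theta(0)=\theta_0$. By continuity of the Hessian, these critical points remain strict local maxima for small $\epsilon$, so $\theta(\epsilon)$ is the shifted maximum referred to in the statement.

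Next I substitute the Hessian into the first-order formula. Lemma~\ref{lem:shift} gives $\delta\theta^{(1)} = -H^{-1}\nabla G(\theta_0)$, where $\delta\theta^{(1)}$ is the coefficient of $\epsilon$ (the factor $\epsilon$ is absorbed into the definition of the infinitesimal shift). Inverting $H = -2\rho_0^2 g$ gives $H^{-1} = -(2\rho_0^2)^{-1} g^{-1}$, hence
\begin{equation}
  \delta\theta^{(1)\,i} = \frac{1}{2\rho_0^2}\, g^{ij}(\theta_0)\,\partial_j G(\theta_0).
\end{equation}
Identifying $k_j=\partial_j G(\theta_0)$ recovers the boxed expression and the force/stiffness form used in Eq.~\eqref{eq:geom-shift-J}, up to the normalization $2\rho_0^2$ that is absorbed into $g_{ij}$ there. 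To address covariance, I evaluate in geodesic normal coordinates at $\theta_0$. There the coordinate Hessian coincides with $\nabla_i\nabla_j J$. Since $\nabla J(\theta_0)=0$, the Christoffel correction vanishes in any chart anyway, so the formula is chart-independent.

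The main obstacle is the legitimacy of the Hessian--Fisher identity. It requires the signal-only condition $s=\psi_{\theta_0}$ and the identity $\partial_{ij}\rho(\theta_0)=-\rho_0 g_{ij}$, and it is easy to misapply. For the extrinsic $(t,\phi)$ parametrization with unit-normalized templates, I would verify the identity directly: differentiate $\rho(\theta)=\langle\psi_{\theta_0},\psi_\theta\rangle$ twice, and use the normalization $\langle\psi_\theta,\psi_\theta\rangle=\text{const}$ to get $\langle\psi,\partial_{ij}\psi\rangle = -\langle\partial_i\psi,\partial_j\psi\rangle$. I would also flag that the normalization of $\rho_0$ (unit template versus $\rho_0^2=\langle\psi,\psi\rangle$) must be tracked consistently, since a mismatch changes the prefactor $1/(2\rho_0^2)$ by a power of $\rho_0$. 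Finally, I would note that the $\mathcal{O}(\epsilon^2)$ remainder inherits uniformity from the $C^2$ bounds on $G$.
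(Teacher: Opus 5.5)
Your proposal is correct and follows the paper's proof: take $\delta\theta^{(1)} = -H^{-1}\nabla G(\theta_0)$ from Lemma~\ref{lem:shift}, substitute $H = -2\rho_0^2 g$ from Lemma~\ref{lem:hessian-fisher}, and invert. Your extra checks (negative definiteness, persistence of the maximum, covariance) go beyond what the paper writes but are sound. Your caution about normalizing $\rho_0$ is also justified. The identity $\partial_{ij}\rho(\theta_0) = -\rho_0 g_{ij}$ holds as written for unit-normalized templates with $s=\rho_0\hat\psi_{\theta_0}$. Taking $s=\psi_{\theta_0}$ literally with the unnormalized Fisher metric instead gives $\partial_{ij}\rho(\theta_0) = -g_{ij}$, and hence a prefactor $1/(2\rho_0)$.
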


\begin{proof}
From Lemma~\ref{lem:shift}, the first-order shift is $\delta\theta^{(1)} = -H^{-1}\nabla G(\theta_0)$. From Lemma~\ref{lem:hessian-fisher}, the Hessian at the peak is $H_{ij} = -2\rho_0^2 g_{ij}$. The inverse Hessian is therefore $(H^{-1})^{ij} = -\frac{1}{2\rho_0^2} g^{ij}$. Substituting this expression yields:
\begin{equation}
    \delta\theta^{(1)i} = -\left(-\frac{1}{2\rho_0^2} g^{ij}\right) \partial_j G = \frac{1}{2\rho_0^2} g^{ij} \partial_j G.
\end{equation}
\end{proof}

\noindent This theorem generalizes the Cutler--Vallisneri formalism~\cite{cutlerLISADetectionsMassive2007}. While that work derived the bias specifically for \emph{waveform errors} (where $G$ arises from a vector perturbation $\delta h$), Theorem~\ref{thm:gen-cv} applies to any functional perturbation of the surface, including the \emph{metric perturbations}~\cite{mageeImpactSelectionBiases2024, zackayDetectingGravitationalWaves2021} caused by PSD drift derived in Section~\ref{sec:methods}.
\section{Analytic Validation with a SPA Toy Model}
\label{app:spa-linear}

This appendix provides a concise analytic verification of the first-order Dual CV corrections derived in Sec.~\ref{subsec:first-order-mpmp}. By utilizing a Newtonian stationary-phase approximation (SPA) inspiral~\cite{cutlerGravitationalWavesMergin1994a, allenFINDCHIRPAlgorithmDetection2012a}, the matched-filter integral admits a closed-form solution. This allows us to compute the \emph{exact} shift of the SNR peak and compare it directly to the geometric prediction.

\subsection*{Exact SNR Expansion}

Consider a signal model $|\tilde{h}_{\mathrm{SPA}}(f)|^2 \propto f^{-7/3}$ and a noise PSD $S_1(f) = S_0 f^p$. The effective whitened weight is $w(f) \equiv W(f)|\tilde{h}|^2 \propto f^{-n}$, with $n = 7/3 + p$. We introduce a linear whitening-phase drift $\Phi_a(f) = \alpha + \beta f$.

For extrinsic offsets $\Delta t = t - t_0$ and $\Delta\phi = \phi - \phi_0$, the perturbed matched filter takes the form:
\begin{equation}
\rho_\epsilon(\Delta t, \Delta\phi) = \int_{f_{\min}}^{f_{\max}} f^{-n} e^{i \Psi(f)} \, df,
\end{equation}
where the phase argument is $\Psi(f) = 2\pi f \Delta t - \Delta\phi + \epsilon(\alpha + \beta f)$. Defining the parameter $\kappa := 2\pi\Delta t + \epsilon\beta$, the integral evaluates exactly to:
\begin{equation}
\begin{aligned}
  \rho_\epsilon(\Delta t, \Delta\phi) &= e^{-i(\Delta\phi - \epsilon\alpha)}\, (-i\kappa)^{n-1} \\
  &\quad \times \Bigl[ \Gamma(1-n, -i\kappa f_{\min}) \\
  &\quad\quad - \Gamma(1-n, -i\kappa f_{\max}) \Bigr].
\end{aligned}
\end{equation}
The perturbed SNR surface $J_\epsilon = |\rho_\epsilon|^2$ is thus known analytically. Expanding $J_\epsilon$ to first order in $\epsilon$ and solving the stationarity conditions $\partial_{\Delta t} J = \partial_{\Delta\phi} J = 0$ at the origin yields the \emph{exact} shifts:
\begin{equation}
\boxed{
\begin{aligned}
\Delta\phi_{\mathrm{full}}^{(1)} &= \epsilon\left[\alpha + \beta\,\frac{m_1}{m_0}\right], \\
\Delta t_{\mathrm{full}}^{(1)} &= \frac{\epsilon}{2\pi} \left[\beta + \alpha\,\frac{m_1}{m_2}\right],
\end{aligned}
}
\end{equation}
where the frequency moments $m_k$ are defined as~\cite{cutlerGravitationalWavesMergin1994a, owenSearchTemplatesGravitational1996a,poissonGravityNewtonianPostNewtonian2014a}:
\begin{equation}
m_k = \int_{f_{\min}}^{f_{\max}} f^{k-n}\,df.
\end{equation}

\subsection*{Agreement with Geometric Theory}

We now evaluate the general Dual CV formulas (Eqs.~18--19) for this specific model. The predicted shifts are ratios of probability-weighted moments:
\begin{align}
\delta\phi^{(1)} &= \epsilon\,\frac{\langle \Phi_a \rangle_w}{\langle 1 \rangle_w}, \\
\delta t^{(1)} &= \epsilon\,\frac{\langle 2\pi f \cdot \Phi_a \rangle_w}{\langle (2\pi f)^2 \rangle_w}.
\end{align}
Substituting the power-law weight $w(f) \propto f^{-n}$ and linear drift $\Phi_a = \alpha + \beta f$, these integrals reduce to sums of the moments $m_k$:
\begin{equation}
\boxed{
\begin{aligned}
\delta\phi^{(1)} &= \epsilon \frac{\alpha m_0 + \beta m_1}{m_0} = \epsilon\left[\alpha + \beta\frac{m_1}{m_0}\right], \\
\delta t^{(1)} &= \epsilon \frac{2\pi(\alpha m_1 + \beta m_2)}{(2\pi)^2 m_2} = \frac{\epsilon}{2\pi}\left[\beta + \alpha\frac{m_1}{m_2}\right].
\end{aligned}
}
\end{equation}
Comparing the boxed results, we find:
\begin{equation}
(\Delta t_{\mathrm{full}}^{(1)}, \Delta\phi_{\mathrm{full}}^{(1)}) \equiv (\delta t^{(1)}, \delta\phi^{(1)}).
\end{equation}
This term-by-term identity confirms that the geometric Dual CV framework exactly recovers the physical shift of the SNR peak in the perturbative limit.
\section{First-Order SNR Change under PSD Drift}
\label{app:snr-drift}

This appendix derives the leading-order perturbation to the optimal matched-filter SNR when the noise Power Spectral Density (PSD) undergoes a small spectral deformation.

Consider a single detector with frequency-domain template $\tilde{h}(f)$ and a reference noise PSD $S_0(f)$. In the ideal case of perfect whitening, the squared optimal SNR is given by the standard noise-weighted integral~\cite{wainsteinExtractionSignalsNoise1970, allenFINDCHIRPAlgorithmDetection2012a}:
\begin{equation}
  J_0 = 4 \int_{0}^{\infty} \frac{|\tilde{h}(f)|^2}{S_0(f)}\,df.
\end{equation}
We model the PSD drift as a multiplicative spectral perturbation:
\begin{equation}
  S_\epsilon(f) = S_0(f)\,e^{2\epsilon a(f)}, \quad 0 < \epsilon \ll 1,
\end{equation}
where $a(f)$ is a real, bounded function representing the log-amplitude drift. Expanding the inverse PSD to first order in $\epsilon$:
\begin{equation}
  \frac{1}{S_\epsilon(f)}
  = \frac{1}{S_0(f)}\bigl[1 - 2\epsilon a(f)\bigr] + \mathcal{O}(\epsilon^2).
\end{equation}
Substituting this into the SNR definition yields the perturbed sensitivity $J_\epsilon$:
\begin{equation}
\begin{aligned}
  J_\epsilon
  &= 4 \int_{0}^{\infty} \frac{|\tilde{h}(f)|^2}{S_0(f)} \bigl[1 - 2\epsilon a(f)\bigr]\,df + \mathcal{O}(\epsilon^2) \\
  &= J_0 - 8\epsilon \int_{0}^{\infty} \frac{|\tilde{h}(f)|^2}{S_0(f)}\,a(f)\,df + \mathcal{O}(\epsilon^2).
\end{aligned}
\end{equation}
To interpret this geometrically, we define the \emph{normalized signal power distribution} $q(f)$:
\begin{equation}
  q(f) := \frac{|\tilde{h}(f)|^2/S_0(f)}{J_0/4}, \quad \int_{0}^{\infty} q(f)\,df = 1.
\end{equation}
This allows us to express the correction as a statistical average. Factorizing $J_0$:
\begin{equation}
  J_\epsilon = J_0 \left[ 1 - 2\epsilon \int_{0}^{\infty} a(f)q(f)\,df \right] + \mathcal{O}(\epsilon^2).
\end{equation}
Defining the $q$-weighted average drift $\langle a \rangle_q := \int a(f)q(f)\,df$, we obtain the fractional change in squared SNR:
\begin{equation}
  \boxed{
  \frac{J_\epsilon - J_0}{J_0} = -2\epsilon\,\langle a \rangle_q + \mathcal{O}(\epsilon^2).
  }
  \label{eq:snr-drift-app}
\end{equation}
This result confirms that, to first order, the sensitivity loss is determined by the overlap between the spectral drift $a(f)$ and the signal's sensitive bandwidth~\cite{zackayDetectingGravitationalWaves2021}. In the main text, we use Eq.~\eqref{eq:snr-drift-app} to validate numerical simulations, verifying that the observed peak drop matches this analytic prediction.

\bibliographystyle{apsrev4-2}
\bibliography{references}

\begin{thebibliography}{51}%
\makeatletter
\providecommand \@ifxundefined [1]{%
 \@ifx{#1\undefined}
}%
\providecommand \@ifnum [1]{%
 \ifnum #1\expandafter \@firstoftwo
 \else \expandafter \@secondoftwo
 \fi
}%
\providecommand \@ifx [1]{%
 \ifx #1\expandafter \@firstoftwo
 \else \expandafter \@secondoftwo
 \fi
}%
\providecommand \natexlab [1]{#1}%
\providecommand \enquote  [1]{``#1''}%
\providecommand \bibnamefont  [1]{#1}%
\providecommand \bibfnamefont [1]{#1}%
\providecommand \citenamefont [1]{#1}%
\providecommand \href@noop [0]{\@secondoftwo}%
\providecommand \href [0]{\begingroup \@sanitize@url \@href}%
\providecommand \@href[1]{\@@startlink{#1}\@@href}%
\providecommand \@@href[1]{\endgroup#1\@@endlink}%
\providecommand \@sanitize@url [0]{\catcode `\\12\catcode `\$12\catcode
  `\&12\catcode `\#12\catcode `\^12\catcode `\_12\catcode `\%12\relax}%
\providecommand \@@startlink[1]{}%
\providecommand \@@endlink[0]{}%
\providecommand \url  [0]{\begingroup\@sanitize@url \@url }%
\providecommand \@url [1]{\endgroup\@href {#1}{\urlprefix }}%
\providecommand \urlprefix  [0]{URL }%
\providecommand \Eprint [0]{\href }%
\providecommand \doibase [0]{https://doi.org/}%
\providecommand \selectlanguage [0]{\@gobble}%
\providecommand \bibinfo  [0]{\@secondoftwo}%
\providecommand \bibfield  [0]{\@secondoftwo}%
\providecommand \translation [1]{[#1]}%
\providecommand \BibitemOpen [0]{}%
\providecommand \bibitemStop [0]{}%
\providecommand \bibitemNoStop [0]{.\EOS\space}%
\providecommand \EOS [0]{\spacefactor3000\relax}%
\providecommand \BibitemShut  [1]{\csname bibitem#1\endcsname}%
\let\auto@bib@innerbib\@empty
\bibitem [{\citenamefont {{The LIGO Scientific Collaboration}}\ and\
  \citenamefont {{the Virgo
  Collaboration}}(2016)}]{theligoscientificcollaborationGW150914FirstResults2016}%
  \BibitemOpen
  \bibfield  {author} {\bibinfo {author} {\bibnamefont {{The LIGO Scientific
  Collaboration}}}\ and\ \bibinfo {author} {\bibnamefont {{the Virgo
  Collaboration}}},\ }\href {https://doi.org/10.1103/PhysRevD.93.122003}
  {\bibfield  {journal} {\bibinfo  {journal} {Physical Review D}\ }\textbf
  {\bibinfo {volume} {93}},\ \bibinfo {pages} {122003} (\bibinfo {year}
  {2016})},\ \Eprint {https://arxiv.org/abs/1602.03839} {arXiv:1602.03839
  [gr-qc]} \BibitemShut {NoStop}%
\bibitem [{\citenamefont {Allen}\ \emph {et~al.}(2012)\citenamefont {Allen},
  \citenamefont {Anderson}, \citenamefont {Brady}, \citenamefont {Brown},\ and\
  \citenamefont {Creighton}}]{allenFINDCHIRPAlgorithmDetection2012a}%
  \BibitemOpen
  \bibfield  {author} {\bibinfo {author} {\bibfnamefont {B.}~\bibnamefont
  {Allen}}, \bibinfo {author} {\bibfnamefont {W.~G.}\ \bibnamefont {Anderson}},
  \bibinfo {author} {\bibfnamefont {P.~R.}\ \bibnamefont {Brady}}, \bibinfo
  {author} {\bibfnamefont {D.~A.}\ \bibnamefont {Brown}},\ and\ \bibinfo
  {author} {\bibfnamefont {J.~D.~E.}\ \bibnamefont {Creighton}},\ }\href
  {https://doi.org/10.1103/PhysRevD.85.122006} {\bibfield  {journal} {\bibinfo
  {journal} {Physical Review D}\ }\textbf {\bibinfo {volume} {85}},\ \bibinfo
  {pages} {122006} (\bibinfo {year} {2012})},\ \Eprint
  {https://arxiv.org/abs/gr-qc/0509116} {arXiv:gr-qc/0509116} \BibitemShut
  {NoStop}%
\bibitem [{\citenamefont {Cutler}\ and\ \citenamefont
  {Flanagan}(1994)}]{cutlerGravitationalWavesMergin1994a}%
  \BibitemOpen
  \bibfield  {author} {\bibinfo {author} {\bibfnamefont {C.}~\bibnamefont
  {Cutler}}\ and\ \bibinfo {author} {\bibfnamefont {E.}~\bibnamefont
  {Flanagan}},\ }\href {https://doi.org/10.1103/PhysRevD.49.2658} {\bibfield
  {journal} {\bibinfo  {journal} {Physical Review D}\ }\textbf {\bibinfo
  {volume} {49}},\ \bibinfo {pages} {2658} (\bibinfo {year} {1994})},\ \Eprint
  {https://arxiv.org/abs/gr-qc/9402014} {arXiv:gr-qc/9402014} \BibitemShut
  {NoStop}%
\bibitem [{\citenamefont {Wainstein}\ and\ \citenamefont
  {Zubakov}(1970)}]{wainsteinExtractionSignalsNoise1970}%
  \BibitemOpen
  \bibfield  {author} {\bibinfo {author} {\bibfnamefont {L.~A.}\ \bibnamefont
  {Wainstein}}\ and\ \bibinfo {author} {\bibfnamefont {V.~D.}\ \bibnamefont
  {Zubakov}},\ }\href@noop {} {\emph {\bibinfo {title} {Extraction of Signals
  from Noise}}},\ \bibinfo {edition} {republ}\ ed.,\ Dover Books on Physics and
  Mathematical Physics\ (\bibinfo  {publisher} {Dover Publ},\ \bibinfo
  {address} {New York, NY},\ \bibinfo {year} {1970})\BibitemShut {NoStop}%
\bibitem [{\citenamefont {Littenberg}\ and\ \citenamefont
  {Cornish}(2015)}]{littenbergBayesLineBayesianInference2015}%
  \BibitemOpen
  \bibfield  {author} {\bibinfo {author} {\bibfnamefont {T.~B.}\ \bibnamefont
  {Littenberg}}\ and\ \bibinfo {author} {\bibfnamefont {N.~J.}\ \bibnamefont
  {Cornish}},\ }\href {https://doi.org/10.1103/PhysRevD.91.084034} {\bibfield
  {journal} {\bibinfo  {journal} {Physical Review D}\ }\textbf {\bibinfo
  {volume} {91}},\ \bibinfo {pages} {084034} (\bibinfo {year} {2015})},\
  \Eprint {https://arxiv.org/abs/1410.3852} {arXiv:1410.3852 [gr-qc]}
  \BibitemShut {NoStop}%
\bibitem [{\citenamefont {Oppenheim}\ and\ \citenamefont
  {Schafer}(2010)}]{oppenheimDiscretetimeSignalProcessing2010}%
  \BibitemOpen
  \bibinfo {editor} {\bibfnamefont {A.~V.}\ \bibnamefont {Oppenheim}}\ and\
  \bibinfo {editor} {\bibfnamefont {R.~W.}\ \bibnamefont {Schafer}},\ eds.,\
  \href@noop {} {\emph {\bibinfo {title} {Discrete-Time Signal Processing}}},\
  \bibinfo {edition} {3rd}\ ed.,\ Prentice {{Hall}} Signal Processing Series\
  (\bibinfo  {publisher} {Pearson},\ \bibinfo {address} {Upper Saddle River
  Munich},\ \bibinfo {year} {2010})\BibitemShut {NoStop}%
\bibitem [{\citenamefont {Smith}(2008)}]{smithIntroductionDigitalFilters2008}%
  \BibitemOpen
  \bibfield  {author} {\bibinfo {author} {\bibfnamefont {J.~O.~I.}\
  \bibnamefont {Smith}},\ }\href@noop {} {\emph {\bibinfo {title} {Introduction
  to Digital Filters: With Audio Applications}}},\ \bibinfo {edition} {2nd}\
  ed.\ (\bibinfo  {publisher} {Stanford Univ},\ \bibinfo {address} {Stanford,
  Calif},\ \bibinfo {year} {2008})\BibitemShut {NoStop}%
\bibitem [{\citenamefont {{C. Messick, et
  al.}}(2017)}]{messickAnalysisFrameworkPrompt2017b}%
  \BibitemOpen
  \bibfield  {author} {\bibinfo {author} {\bibnamefont {{C. Messick, et
  al.}}},\ }\href {https://doi.org/10.1103/PhysRevD.95.042001} {\bibfield
  {journal} {\bibinfo  {journal} {Physical Review D}\ }\textbf {\bibinfo
  {volume} {95}},\ \bibinfo {pages} {042001} (\bibinfo {year} {2017})},\
  \Eprint {https://arxiv.org/abs/1604.04324} {arXiv:1604.04324 [astro-ph]}
  \BibitemShut {NoStop}%
\bibitem [{\citenamefont {Cannon}\ \emph {et~al.}(2012)\citenamefont {Cannon},
  \citenamefont {Cariou}, \citenamefont {Chapman}, \citenamefont
  {{Crispin-Ortuzar}}, \citenamefont {Fotopoulos}, \citenamefont {Frei},
  \citenamefont {Hanna}, \citenamefont {Kara}, \citenamefont {Keppel},
  \citenamefont {Liao}, \citenamefont {Privitera}, \citenamefont {Searle},
  \citenamefont {Singer},\ and\ \citenamefont
  {Weinstein}}]{cannonEarlyWarningDetectionGravitational2012d}%
  \BibitemOpen
  \bibfield  {author} {\bibinfo {author} {\bibfnamefont {K.}~\bibnamefont
  {Cannon}}, \bibinfo {author} {\bibfnamefont {R.}~\bibnamefont {Cariou}},
  \bibinfo {author} {\bibfnamefont {A.}~\bibnamefont {Chapman}}, \bibinfo
  {author} {\bibfnamefont {M.}~\bibnamefont {{Crispin-Ortuzar}}}, \bibinfo
  {author} {\bibfnamefont {N.}~\bibnamefont {Fotopoulos}}, \bibinfo {author}
  {\bibfnamefont {M.}~\bibnamefont {Frei}}, \bibinfo {author} {\bibfnamefont
  {C.}~\bibnamefont {Hanna}}, \bibinfo {author} {\bibfnamefont
  {E.}~\bibnamefont {Kara}}, \bibinfo {author} {\bibfnamefont {D.}~\bibnamefont
  {Keppel}}, \bibinfo {author} {\bibfnamefont {L.}~\bibnamefont {Liao}},
  \bibinfo {author} {\bibfnamefont {S.}~\bibnamefont {Privitera}}, \bibinfo
  {author} {\bibfnamefont {A.}~\bibnamefont {Searle}}, \bibinfo {author}
  {\bibfnamefont {L.}~\bibnamefont {Singer}},\ and\ \bibinfo {author}
  {\bibfnamefont {A.}~\bibnamefont {Weinstein}},\ }\href
  {https://doi.org/10.1088/0004-637X/748/2/136} {\bibfield  {journal} {\bibinfo
   {journal} {The Astrophysical Journal}\ }\textbf {\bibinfo {volume} {748}},\
  \bibinfo {pages} {136} (\bibinfo {year} {2012})},\ \Eprint
  {https://arxiv.org/abs/1107.2665} {arXiv:1107.2665 [astro-ph]} \BibitemShut
  {NoStop}%
\bibitem [{\citenamefont {Nitz}\ \emph
  {et~al.}(2018{\natexlab{a}})\citenamefont {Nitz}, \citenamefont {Canton},
  \citenamefont {Davis},\ and\ \citenamefont {Reyes}}]{nitzPyCBCLiveRapid2018}%
  \BibitemOpen
  \bibfield  {author} {\bibinfo {author} {\bibfnamefont {A.~H.}\ \bibnamefont
  {Nitz}}, \bibinfo {author} {\bibfnamefont {T.~D.}\ \bibnamefont {Canton}},
  \bibinfo {author} {\bibfnamefont {D.}~\bibnamefont {Davis}},\ and\ \bibinfo
  {author} {\bibfnamefont {S.}~\bibnamefont {Reyes}},\ }\href
  {https://doi.org/10.1103/PhysRevD.98.024050} {\bibfield  {journal} {\bibinfo
  {journal} {Physical Review D}\ }\textbf {\bibinfo {volume} {98}},\ \bibinfo
  {pages} {024050} (\bibinfo {year} {2018}{\natexlab{a}})},\ \Eprint
  {https://arxiv.org/abs/1805.11174} {arXiv:1805.11174 [gr-qc]} \BibitemShut
  {NoStop}%
\bibitem [{\citenamefont {Adams}\ \emph {et~al.}(2016)\citenamefont {Adams},
  \citenamefont {Buskulic}, \citenamefont {Germain}, \citenamefont {Guidi},
  \citenamefont {Marion}, \citenamefont {Montani}, \citenamefont {Mours},
  \citenamefont {Piergiovanni},\ and\ \citenamefont
  {Wang}}]{adamsLowlatencyAnalysisPipeline2016a}%
  \BibitemOpen
  \bibfield  {author} {\bibinfo {author} {\bibfnamefont {T.}~\bibnamefont
  {Adams}}, \bibinfo {author} {\bibfnamefont {D.}~\bibnamefont {Buskulic}},
  \bibinfo {author} {\bibfnamefont {V.}~\bibnamefont {Germain}}, \bibinfo
  {author} {\bibfnamefont {G.~M.}\ \bibnamefont {Guidi}}, \bibinfo {author}
  {\bibfnamefont {F.}~\bibnamefont {Marion}}, \bibinfo {author} {\bibfnamefont
  {M.}~\bibnamefont {Montani}}, \bibinfo {author} {\bibfnamefont
  {B.}~\bibnamefont {Mours}}, \bibinfo {author} {\bibfnamefont
  {F.}~\bibnamefont {Piergiovanni}},\ and\ \bibinfo {author} {\bibfnamefont
  {G.}~\bibnamefont {Wang}},\ }\href
  {https://doi.org/10.1088/0264-9381/33/17/175012} {\bibfield  {journal}
  {\bibinfo  {journal} {Classical and Quantum Gravity}\ }\textbf {\bibinfo
  {volume} {33}},\ \bibinfo {pages} {175012} (\bibinfo {year} {2016})},\
  \Eprint {https://arxiv.org/abs/1512.02864} {arXiv:1512.02864 [gr-qc]}
  \BibitemShut {NoStop}%
\bibitem [{\citenamefont {Tsukada}\ \emph {et~al.}(2018)\citenamefont
  {Tsukada}, \citenamefont {Cannon}, \citenamefont {Hanna}, \citenamefont
  {Keppel}, \citenamefont {Meacher},\ and\ \citenamefont
  {Messick}}]{tsukadaApplicationZerolatencyWhitening2018}%
  \BibitemOpen
  \bibfield  {author} {\bibinfo {author} {\bibfnamefont {L.}~\bibnamefont
  {Tsukada}}, \bibinfo {author} {\bibfnamefont {K.}~\bibnamefont {Cannon}},
  \bibinfo {author} {\bibfnamefont {C.}~\bibnamefont {Hanna}}, \bibinfo
  {author} {\bibfnamefont {D.}~\bibnamefont {Keppel}}, \bibinfo {author}
  {\bibfnamefont {D.}~\bibnamefont {Meacher}},\ and\ \bibinfo {author}
  {\bibfnamefont {C.}~\bibnamefont {Messick}},\ }\href
  {https://doi.org/10.1103/PhysRevD.97.103009} {\bibfield  {journal} {\bibinfo
  {journal} {Physical Review D}\ }\textbf {\bibinfo {volume} {97}},\ \bibinfo
  {pages} {103009} (\bibinfo {year} {2018})},\ \Eprint
  {https://arxiv.org/abs/1708.04125} {arXiv:1708.04125 [astro-ph]} \BibitemShut
  {NoStop}%
\bibitem [{\citenamefont {{B. Ewing et
  al.}}(2024)}]{ewingPerformanceLowlatencyGstLAL2024}%
  \BibitemOpen
  \bibfield  {author} {\bibinfo {author} {\bibnamefont {{B. Ewing et al.}}},\
  }\href {https://doi.org/10.1103/PhysRevD.109.042008} {\bibfield  {journal}
  {\bibinfo  {journal} {Physical Review D}\ }\textbf {\bibinfo {volume}
  {109}},\ \bibinfo {pages} {042008} (\bibinfo {year} {2024})}\BibitemShut
  {NoStop}%
\bibitem [{\citenamefont {Nitz}\ \emph
  {et~al.}(2018{\natexlab{b}})\citenamefont {Nitz}, \citenamefont {Dal~Canton},
  \citenamefont {Davis},\ and\ \citenamefont
  {Reyes}}]{nitzRapidDetectionGravitational2018}%
  \BibitemOpen
  \bibfield  {author} {\bibinfo {author} {\bibfnamefont {A.~H.}\ \bibnamefont
  {Nitz}}, \bibinfo {author} {\bibfnamefont {T.}~\bibnamefont {Dal~Canton}},
  \bibinfo {author} {\bibfnamefont {D.}~\bibnamefont {Davis}},\ and\ \bibinfo
  {author} {\bibfnamefont {S.}~\bibnamefont {Reyes}},\ }\href
  {https://doi.org/10.1103/PhysRevD.98.024050} {\bibfield  {journal} {\bibinfo
  {journal} {Physical Review D}\ }\textbf {\bibinfo {volume} {98}},\ \bibinfo
  {pages} {024050} (\bibinfo {year} {2018}{\natexlab{b}})}\BibitemShut
  {NoStop}%
\bibitem [{\citenamefont {Chu}\ \emph {et~al.}(2021)\citenamefont {Chu},
  \citenamefont {Kovalam}, \citenamefont {Wen}, \citenamefont {{Slaven-Blair}},
  \citenamefont {Bosveld}, \citenamefont {Chen}, \citenamefont {Clearwater},
  \citenamefont {Codoreanu}, \citenamefont {Du}, \citenamefont {Guo},
  \citenamefont {Guo}, \citenamefont {Kim}, \citenamefont {Li}, \citenamefont
  {Oloworaran}, \citenamefont {Panther}, \citenamefont {Powell}, \citenamefont
  {Sengupta}, \citenamefont {Wette},\ and\ \citenamefont
  {Zhu}}]{chuSPIIROnlineCoherent2021}%
  \BibitemOpen
  \bibfield  {author} {\bibinfo {author} {\bibfnamefont {Q.}~\bibnamefont
  {Chu}}, \bibinfo {author} {\bibfnamefont {M.}~\bibnamefont {Kovalam}},
  \bibinfo {author} {\bibfnamefont {L.}~\bibnamefont {Wen}}, \bibinfo {author}
  {\bibfnamefont {T.}~\bibnamefont {{Slaven-Blair}}}, \bibinfo {author}
  {\bibfnamefont {J.}~\bibnamefont {Bosveld}}, \bibinfo {author} {\bibfnamefont
  {Y.}~\bibnamefont {Chen}}, \bibinfo {author} {\bibfnamefont {P.}~\bibnamefont
  {Clearwater}}, \bibinfo {author} {\bibfnamefont {A.}~\bibnamefont
  {Codoreanu}}, \bibinfo {author} {\bibfnamefont {Z.}~\bibnamefont {Du}},
  \bibinfo {author} {\bibfnamefont {X.}~\bibnamefont {Guo}}, \bibinfo {author}
  {\bibfnamefont {X.}~\bibnamefont {Guo}}, \bibinfo {author} {\bibfnamefont
  {K.}~\bibnamefont {Kim}}, \bibinfo {author} {\bibfnamefont {T.~G.~F.}\
  \bibnamefont {Li}}, \bibinfo {author} {\bibfnamefont {V.}~\bibnamefont
  {Oloworaran}}, \bibinfo {author} {\bibfnamefont {F.}~\bibnamefont {Panther}},
  \bibinfo {author} {\bibfnamefont {J.}~\bibnamefont {Powell}}, \bibinfo
  {author} {\bibfnamefont {A.~S.}\ \bibnamefont {Sengupta}}, \bibinfo {author}
  {\bibfnamefont {K.}~\bibnamefont {Wette}},\ and\ \bibinfo {author}
  {\bibfnamefont {X.}~\bibnamefont {Zhu}},\ }\href
  {https://doi.org/10.48550/arXiv.2011.06787} {\bibinfo {title} {The {{SPIIR}}
  online coherent pipeline to search for gravitational waves from compact
  binary coalescences}} (\bibinfo {year} {2021}),\ \Eprint
  {https://arxiv.org/abs/2011.06787} {arXiv:2011.06787 [gr-qc]} \BibitemShut
  {NoStop}%
\bibitem [{\citenamefont {All{\'e}n{\'e}}\ \emph {et~al.}(2025)\citenamefont
  {All{\'e}n{\'e}}, \citenamefont {Aubin}, \citenamefont {Bentara},
  \citenamefont {Buskulic}, \citenamefont {Guidi}, \citenamefont {Juste},
  \citenamefont {Lethuillier}, \citenamefont {Marion}, \citenamefont {Mobilia},
  \citenamefont {Mours}, \citenamefont {Ouzriat}, \citenamefont {Sainrat},\
  and\ \citenamefont {Sordini}}]{alleneMBTAPipelineDetecting2025}%
  \BibitemOpen
  \bibfield  {author} {\bibinfo {author} {\bibfnamefont {C.}~\bibnamefont
  {All{\'e}n{\'e}}}, \bibinfo {author} {\bibfnamefont {F.}~\bibnamefont
  {Aubin}}, \bibinfo {author} {\bibfnamefont {I.}~\bibnamefont {Bentara}},
  \bibinfo {author} {\bibfnamefont {D.}~\bibnamefont {Buskulic}}, \bibinfo
  {author} {\bibfnamefont {G.~M.}\ \bibnamefont {Guidi}}, \bibinfo {author}
  {\bibfnamefont {V.}~\bibnamefont {Juste}}, \bibinfo {author} {\bibfnamefont
  {M.}~\bibnamefont {Lethuillier}}, \bibinfo {author} {\bibfnamefont
  {F.}~\bibnamefont {Marion}}, \bibinfo {author} {\bibfnamefont
  {L.}~\bibnamefont {Mobilia}}, \bibinfo {author} {\bibfnamefont
  {B.}~\bibnamefont {Mours}}, \bibinfo {author} {\bibfnamefont
  {A.}~\bibnamefont {Ouzriat}}, \bibinfo {author} {\bibfnamefont
  {T.}~\bibnamefont {Sainrat}},\ and\ \bibinfo {author} {\bibfnamefont
  {V.}~\bibnamefont {Sordini}},\ }\href
  {https://doi.org/10.1088/1361-6382/add234} {\bibfield  {journal} {\bibinfo
  {journal} {Classical and Quantum Gravity}\ }\textbf {\bibinfo {volume}
  {42}},\ \bibinfo {pages} {105009} (\bibinfo {year} {2025})},\ \Eprint
  {https://arxiv.org/abs/2501.04598} {arXiv:2501.04598 [gr-qc]} \BibitemShut
  {NoStop}%
\bibitem [{\citenamefont {{A. Buikema et
  al.}}(2020)}]{buikemaSensitivityPerformanceAdvanced2020}%
  \BibitemOpen
  \bibfield  {author} {\bibinfo {author} {\bibnamefont {{A. Buikema et al.}}},\
  }\href {https://doi.org/10.1103/PhysRevD.102.062003} {\bibfield  {journal}
  {\bibinfo  {journal} {Physical Review D}\ }\textbf {\bibinfo {volume}
  {102}},\ \bibinfo {pages} {062003} (\bibinfo {year} {2020})}\BibitemShut
  {NoStop}%
\bibitem [{\citenamefont {{P. B. Covas et
  al.}}(2018)}]{covasIdentificationMitigationNarrow2018}%
  \BibitemOpen
  \bibfield  {author} {\bibinfo {author} {\bibnamefont {{P. B. Covas et
  al.}}},\ }\href {https://doi.org/10.1103/PhysRevD.97.082002} {\bibfield
  {journal} {\bibinfo  {journal} {Physical Review D}\ }\textbf {\bibinfo
  {volume} {97}},\ \bibinfo {pages} {082002} (\bibinfo {year} {2018})},\
  \Eprint {https://arxiv.org/abs/1801.07204} {arXiv:1801.07204 [astro-ph]}
  \BibitemShut {NoStop}%
\bibitem [{\citenamefont {Magee}\ \emph {et~al.}(2024)\citenamefont {Magee},
  \citenamefont {Isi}, \citenamefont {Payne}, \citenamefont {Chatziioannou},
  \citenamefont {Farr}, \citenamefont {Pratten},\ and\ \citenamefont
  {Vitale}}]{mageeImpactSelectionBiases2024}%
  \BibitemOpen
  \bibfield  {author} {\bibinfo {author} {\bibfnamefont {R.}~\bibnamefont
  {Magee}}, \bibinfo {author} {\bibfnamefont {M.}~\bibnamefont {Isi}}, \bibinfo
  {author} {\bibfnamefont {E.}~\bibnamefont {Payne}}, \bibinfo {author}
  {\bibfnamefont {K.}~\bibnamefont {Chatziioannou}}, \bibinfo {author}
  {\bibfnamefont {W.~M.}\ \bibnamefont {Farr}}, \bibinfo {author}
  {\bibfnamefont {G.}~\bibnamefont {Pratten}},\ and\ \bibinfo {author}
  {\bibfnamefont {S.}~\bibnamefont {Vitale}},\ }\href
  {https://doi.org/10.1103/PhysRevD.109.023014} {\bibfield  {journal} {\bibinfo
   {journal} {Physical Review D}\ }\textbf {\bibinfo {volume} {109}},\ \bibinfo
  {pages} {023014} (\bibinfo {year} {2024})}\BibitemShut {NoStop}%
\bibitem [{\citenamefont {Littenberg}\ and\ \citenamefont
  {Cornish}(2010)}]{littenbergSeparatingGravitationalWave2010}%
  \BibitemOpen
  \bibfield  {author} {\bibinfo {author} {\bibfnamefont {T.~B.}\ \bibnamefont
  {Littenberg}}\ and\ \bibinfo {author} {\bibfnamefont {N.~J.}\ \bibnamefont
  {Cornish}},\ }\href {https://doi.org/10.1103/PhysRevD.82.103007} {\bibfield
  {journal} {\bibinfo  {journal} {Physical Review D}\ }\textbf {\bibinfo
  {volume} {82}},\ \bibinfo {pages} {103007} (\bibinfo {year} {2010})},\
  \Eprint {https://arxiv.org/abs/1008.1577} {arXiv:1008.1577 [gr-qc]}
  \BibitemShut {NoStop}%
\bibitem [{\citenamefont {Cutler}\ and\ \citenamefont
  {Vallisneri}(2007)}]{cutlerLISADetectionsMassive2007}%
  \BibitemOpen
  \bibfield  {author} {\bibinfo {author} {\bibfnamefont {C.}~\bibnamefont
  {Cutler}}\ and\ \bibinfo {author} {\bibfnamefont {M.}~\bibnamefont
  {Vallisneri}},\ }\href {https://doi.org/10.1103/PhysRevD.76.104018}
  {\bibfield  {journal} {\bibinfo  {journal} {Physical Review D}\ }\textbf
  {\bibinfo {volume} {76}},\ \bibinfo {pages} {104018} (\bibinfo {year}
  {2007})},\ \Eprint {https://arxiv.org/abs/0707.2982} {arXiv:0707.2982
  [gr-qc]} \BibitemShut {NoStop}%
\bibitem [{\citenamefont {Owen}(1996)}]{owenSearchTemplatesGravitational1996a}%
  \BibitemOpen
  \bibfield  {author} {\bibinfo {author} {\bibfnamefont {B.~J.}\ \bibnamefont
  {Owen}},\ }\href {https://doi.org/10.1103/PhysRevD.53.6749} {\bibfield
  {journal} {\bibinfo  {journal} {Physical Review D}\ }\textbf {\bibinfo
  {volume} {53}},\ \bibinfo {pages} {6749} (\bibinfo {year} {1996})},\ \Eprint
  {https://arxiv.org/abs/gr-qc/9511032} {arXiv:gr-qc/9511032} \BibitemShut
  {NoStop}%
\bibitem [{\citenamefont {{The LIGO Scientific Collaboration}}\ \emph
  {et~al.}(2025{\natexlab{a}})\citenamefont {{The LIGO Scientific
  Collaboration}}, \citenamefont {{the Virgo Collaboration}},\ and\
  \citenamefont {{the KAGRA
  Collaboration}}}]{collaborationGWTC40UpdatingGravitationalWave2025}%
  \BibitemOpen
  \bibfield  {author} {\bibinfo {author} {\bibnamefont {{The LIGO Scientific
  Collaboration}}}, \bibinfo {author} {\bibnamefont {{the Virgo
  Collaboration}}},\ and\ \bibinfo {author} {\bibnamefont {{the KAGRA
  Collaboration}}},\ }\href {https://doi.org/10.48550/arXiv.2508.18082}
  {\bibinfo {title} {{{GWTC-4}}.0: {{Updating}} the {{Gravitational-Wave
  Transient Catalog}} with {{Observations}} from the {{First Part}} of the
  {{Fourth LIGO-Virgo-KAGRA Observing Run}}}} (\bibinfo {year}
  {2025}{\natexlab{a}}),\ \Eprint {https://arxiv.org/abs/2508.18082}
  {arXiv:2508.18082 [gr-qc]} \BibitemShut {NoStop}%
\bibitem [{\citenamefont {{The LIGO Scientific Collaboration}}\ \emph
  {et~al.}(2025{\natexlab{b}})\citenamefont {{The LIGO Scientific
  Collaboration}}, \citenamefont {{the Virgo Collaboration}},\ and\
  \citenamefont {{the KAGRA Collaboration}}}]{collaborationOpenDataLIGO2025}%
  \BibitemOpen
  \bibfield  {author} {\bibinfo {author} {\bibnamefont {{The LIGO Scientific
  Collaboration}}}, \bibinfo {author} {\bibnamefont {{the Virgo
  Collaboration}}},\ and\ \bibinfo {author} {\bibnamefont {{the KAGRA
  Collaboration}}},\ }\href {https://doi.org/10.48550/arXiv.2508.18079}
  {\bibinfo {title} {Open {{Data}} from {{LIGO}}, {{Virgo}}, and {{KAGRA}}
  through the {{First Part}} of the {{Fourth Observing Run}}}} (\bibinfo {year}
  {2025}{\natexlab{b}}),\ \Eprint {https://arxiv.org/abs/2508.18079}
  {arXiv:2508.18079 [gr-qc]} \BibitemShut {NoStop}%
\bibitem [{\citenamefont {Hanna}\ \emph {et~al.}(2022)\citenamefont {Hanna},
  \citenamefont {Kennington}, \citenamefont {Sakon}, \citenamefont {Privitera},
  \citenamefont {Fernandez}, \citenamefont {Wang}, \citenamefont {Messick},
  \citenamefont {Pace}, \citenamefont {Cannon}, \citenamefont {Joshi},
  \citenamefont {Huxford}, \citenamefont {Caudill}, \citenamefont {Chan},
  \citenamefont {Cousins}, \citenamefont {Creighton}, \citenamefont {Ewing},
  \citenamefont {Fong}, \citenamefont {Godwin}, \citenamefont {Magee},
  \citenamefont {Meacher}, \citenamefont {Morisaki}, \citenamefont {Mukherjee},
  \citenamefont {Ohta}, \citenamefont {Sachdev}, \citenamefont {Singh},
  \citenamefont {Tapia}, \citenamefont {Tsukada}, \citenamefont {Tsuna},
  \citenamefont {Tsutsui}, \citenamefont {Ueno}, \citenamefont {Viets},
  \citenamefont {Wade},\ and\ \citenamefont
  {Wade}}]{hannaBinaryTreeApproach2022c}%
  \BibitemOpen
  \bibfield  {author} {\bibinfo {author} {\bibfnamefont {C.}~\bibnamefont
  {Hanna}}, \bibinfo {author} {\bibfnamefont {J.}~\bibnamefont {Kennington}},
  \bibinfo {author} {\bibfnamefont {S.}~\bibnamefont {Sakon}}, \bibinfo
  {author} {\bibfnamefont {S.}~\bibnamefont {Privitera}}, \bibinfo {author}
  {\bibfnamefont {M.}~\bibnamefont {Fernandez}}, \bibinfo {author}
  {\bibfnamefont {J.}~\bibnamefont {Wang}}, \bibinfo {author} {\bibfnamefont
  {C.}~\bibnamefont {Messick}}, \bibinfo {author} {\bibfnamefont
  {A.}~\bibnamefont {Pace}}, \bibinfo {author} {\bibfnamefont {K.}~\bibnamefont
  {Cannon}}, \bibinfo {author} {\bibfnamefont {P.}~\bibnamefont {Joshi}},
  \bibinfo {author} {\bibfnamefont {R.}~\bibnamefont {Huxford}}, \bibinfo
  {author} {\bibfnamefont {S.}~\bibnamefont {Caudill}}, \bibinfo {author}
  {\bibfnamefont {C.}~\bibnamefont {Chan}}, \bibinfo {author} {\bibfnamefont
  {B.}~\bibnamefont {Cousins}}, \bibinfo {author} {\bibfnamefont {J.~D.~E.}\
  \bibnamefont {Creighton}}, \bibinfo {author} {\bibfnamefont {B.}~\bibnamefont
  {Ewing}}, \bibinfo {author} {\bibfnamefont {H.}~\bibnamefont {Fong}},
  \bibinfo {author} {\bibfnamefont {P.}~\bibnamefont {Godwin}}, \bibinfo
  {author} {\bibfnamefont {R.}~\bibnamefont {Magee}}, \bibinfo {author}
  {\bibfnamefont {D.}~\bibnamefont {Meacher}}, \bibinfo {author} {\bibfnamefont
  {S.}~\bibnamefont {Morisaki}}, \bibinfo {author} {\bibfnamefont
  {D.}~\bibnamefont {Mukherjee}}, \bibinfo {author} {\bibfnamefont
  {H.}~\bibnamefont {Ohta}}, \bibinfo {author} {\bibfnamefont {S.}~\bibnamefont
  {Sachdev}}, \bibinfo {author} {\bibfnamefont {D.}~\bibnamefont {Singh}},
  \bibinfo {author} {\bibfnamefont {R.}~\bibnamefont {Tapia}}, \bibinfo
  {author} {\bibfnamefont {L.}~\bibnamefont {Tsukada}}, \bibinfo {author}
  {\bibfnamefont {D.}~\bibnamefont {Tsuna}}, \bibinfo {author} {\bibfnamefont
  {T.}~\bibnamefont {Tsutsui}}, \bibinfo {author} {\bibfnamefont
  {K.}~\bibnamefont {Ueno}}, \bibinfo {author} {\bibfnamefont {A.}~\bibnamefont
  {Viets}}, \bibinfo {author} {\bibfnamefont {L.}~\bibnamefont {Wade}},\ and\
  \bibinfo {author} {\bibfnamefont {M.}~\bibnamefont {Wade}},\ }\href
  {https://doi.org/10.48550/arXiv.2209.11298} {\bibinfo {title} {A binary tree
  approach to template placement for searches for gravitational waves from
  compact binary mergers}} (\bibinfo {year} {2022}),\ \Eprint
  {https://arxiv.org/abs/2209.11298} {arXiv:2209.11298 [gr-qc]} \BibitemShut
  {NoStop}%
\bibitem [{\citenamefont
  {Fairhurst}(2009)}]{fairhurstTriangulationGravitationalWave2009}%
  \BibitemOpen
  \bibfield  {author} {\bibinfo {author} {\bibfnamefont {S.}~\bibnamefont
  {Fairhurst}},\ }\href {https://doi.org/10.1088/1367-2630/11/12/123006
  10.1088/1367-2630/13/6/069602} {\bibfield  {journal} {\bibinfo  {journal}
  {New Journal of Physics}\ }\textbf {\bibinfo {volume} {11}},\ \bibinfo
  {pages} {123006} (\bibinfo {year} {2009})},\ \Eprint
  {https://arxiv.org/abs/0908.2356} {arXiv:0908.2356 [gr-qc]} \BibitemShut
  {NoStop}%
\bibitem [{\citenamefont {Sullivan}\ \emph {et~al.}(2023)\citenamefont
  {Sullivan}, \citenamefont {Asali}, \citenamefont {M{\'a}rka}, \citenamefont
  {Sigg}, \citenamefont {Countryman}, \citenamefont {Bartos}, \citenamefont
  {Kawabe}, \citenamefont {Pirello}, \citenamefont {Thomas}, \citenamefont
  {Shaffer}, \citenamefont {Thorne}, \citenamefont {Laxen}, \citenamefont
  {Betzwieser}, \citenamefont {Izumi}, \citenamefont {Bork}, \citenamefont
  {Ivanov}, \citenamefont {Barker}, \citenamefont {Adams}, \citenamefont
  {Clara}, \citenamefont {Factourovich},\ and\ \citenamefont
  {M{\'a}rka}}]{sullivanTimingSystemLIGO2023}%
  \BibitemOpen
  \bibfield  {author} {\bibinfo {author} {\bibfnamefont {A.~G.}\ \bibnamefont
  {Sullivan}}, \bibinfo {author} {\bibfnamefont {Y.}~\bibnamefont {Asali}},
  \bibinfo {author} {\bibfnamefont {Z.}~\bibnamefont {M{\'a}rka}}, \bibinfo
  {author} {\bibfnamefont {D.}~\bibnamefont {Sigg}}, \bibinfo {author}
  {\bibfnamefont {S.}~\bibnamefont {Countryman}}, \bibinfo {author}
  {\bibfnamefont {I.}~\bibnamefont {Bartos}}, \bibinfo {author} {\bibfnamefont
  {K.}~\bibnamefont {Kawabe}}, \bibinfo {author} {\bibfnamefont {M.~D.}\
  \bibnamefont {Pirello}}, \bibinfo {author} {\bibfnamefont {M.}~\bibnamefont
  {Thomas}}, \bibinfo {author} {\bibfnamefont {T.~J.}\ \bibnamefont {Shaffer}},
  \bibinfo {author} {\bibfnamefont {K.}~\bibnamefont {Thorne}}, \bibinfo
  {author} {\bibfnamefont {M.}~\bibnamefont {Laxen}}, \bibinfo {author}
  {\bibfnamefont {J.}~\bibnamefont {Betzwieser}}, \bibinfo {author}
  {\bibfnamefont {K.}~\bibnamefont {Izumi}}, \bibinfo {author} {\bibfnamefont
  {R.}~\bibnamefont {Bork}}, \bibinfo {author} {\bibfnamefont {A.}~\bibnamefont
  {Ivanov}}, \bibinfo {author} {\bibfnamefont {D.}~\bibnamefont {Barker}},
  \bibinfo {author} {\bibfnamefont {C.}~\bibnamefont {Adams}}, \bibinfo
  {author} {\bibfnamefont {F.}~\bibnamefont {Clara}}, \bibinfo {author}
  {\bibfnamefont {M.}~\bibnamefont {Factourovich}},\ and\ \bibinfo {author}
  {\bibfnamefont {S.}~\bibnamefont {M{\'a}rka}},\ }\href
  {https://doi.org/10.1103/PhysRevD.108.022003} {\bibfield  {journal} {\bibinfo
   {journal} {Physical Review D}\ }\textbf {\bibinfo {volume} {108}},\ \bibinfo
  {pages} {022003} (\bibinfo {year} {2023})},\ \Eprint
  {https://arxiv.org/abs/2304.01188} {arXiv:2304.01188 [astro-ph]} \BibitemShut
  {NoStop}%
\bibitem [{\citenamefont
  {Fairhurst}(2018)}]{fairhurstLocalizationTransientGravitational2018}%
  \BibitemOpen
  \bibfield  {author} {\bibinfo {author} {\bibfnamefont {S.}~\bibnamefont
  {Fairhurst}},\ }\href {https://doi.org/10.1088/1361-6382/aab675} {\bibfield
  {journal} {\bibinfo  {journal} {Classical and Quantum Gravity}\ }\textbf
  {\bibinfo {volume} {35}},\ \bibinfo {pages} {105002} (\bibinfo {year}
  {2018})},\ \Eprint {https://arxiv.org/abs/1712.04724} {arXiv:1712.04724
  [gr-qc]} \BibitemShut {NoStop}%
\bibitem [{\citenamefont {Zackay}\ \emph {et~al.}(2021)\citenamefont {Zackay},
  \citenamefont {Venumadhav}, \citenamefont {Roulet}, \citenamefont {Dai},\
  and\ \citenamefont {Zaldarriaga}}]{zackayDetectingGravitationalWaves2021}%
  \BibitemOpen
  \bibfield  {author} {\bibinfo {author} {\bibfnamefont {B.}~\bibnamefont
  {Zackay}}, \bibinfo {author} {\bibfnamefont {T.}~\bibnamefont {Venumadhav}},
  \bibinfo {author} {\bibfnamefont {J.}~\bibnamefont {Roulet}}, \bibinfo
  {author} {\bibfnamefont {L.}~\bibnamefont {Dai}},\ and\ \bibinfo {author}
  {\bibfnamefont {M.}~\bibnamefont {Zaldarriaga}},\ }\href
  {https://doi.org/10.1103/PhysRevD.104.063034} {\bibfield  {journal} {\bibinfo
   {journal} {Physical Review D}\ }\textbf {\bibinfo {volume} {104}},\ \bibinfo
  {pages} {063034} (\bibinfo {year} {2021})},\ \Eprint
  {https://arxiv.org/abs/1908.05644} {arXiv:1908.05644 [astro-ph]} \BibitemShut
  {NoStop}%
\bibitem [{\citenamefont {Biscoveanu}\ \emph {et~al.}(2020)\citenamefont
  {Biscoveanu}, \citenamefont {Haster}, \citenamefont {Vitale},\ and\
  \citenamefont {Davies}}]{biscoveanuQuantifyingEffectPower2020}%
  \BibitemOpen
  \bibfield  {author} {\bibinfo {author} {\bibfnamefont {S.}~\bibnamefont
  {Biscoveanu}}, \bibinfo {author} {\bibfnamefont {C.-J.}\ \bibnamefont
  {Haster}}, \bibinfo {author} {\bibfnamefont {S.}~\bibnamefont {Vitale}},\
  and\ \bibinfo {author} {\bibfnamefont {J.}~\bibnamefont {Davies}},\ }\href
  {https://doi.org/10.1103/PhysRevD.102.023008} {\bibfield  {journal} {\bibinfo
   {journal} {Physical Review D}\ }\textbf {\bibinfo {volume} {102}},\ \bibinfo
  {pages} {023008} (\bibinfo {year} {2020})},\ \Eprint
  {https://arxiv.org/abs/2004.05149} {arXiv:2004.05149 [astro-ph]} \BibitemShut
  {NoStop}%
\bibitem [{\citenamefont {Vitale}\ \emph {et~al.}(2012)\citenamefont {Vitale},
  \citenamefont {Pozzo}, \citenamefont {Li}, \citenamefont {Broeck},
  \citenamefont {Mandel}, \citenamefont {Aylott},\ and\ \citenamefont
  {Veitch}}]{vitaleEffectCalibrationErrors2012a}%
  \BibitemOpen
  \bibfield  {author} {\bibinfo {author} {\bibfnamefont {S.}~\bibnamefont
  {Vitale}}, \bibinfo {author} {\bibfnamefont {W.~D.}\ \bibnamefont {Pozzo}},
  \bibinfo {author} {\bibfnamefont {T.~G.~F.}\ \bibnamefont {Li}}, \bibinfo
  {author} {\bibfnamefont {C.~V.~D.}\ \bibnamefont {Broeck}}, \bibinfo {author}
  {\bibfnamefont {I.}~\bibnamefont {Mandel}}, \bibinfo {author} {\bibfnamefont
  {B.}~\bibnamefont {Aylott}},\ and\ \bibinfo {author} {\bibfnamefont
  {J.}~\bibnamefont {Veitch}},\ }\href
  {https://doi.org/10.1103/PhysRevD.85.064034} {\bibfield  {journal} {\bibinfo
  {journal} {Physical Review D}\ }\textbf {\bibinfo {volume} {85}},\ \bibinfo
  {pages} {064034} (\bibinfo {year} {2012})},\ \Eprint
  {https://arxiv.org/abs/1111.3044} {arXiv:1111.3044 [gr-qc]} \BibitemShut
  {NoStop}%
\bibitem [{\citenamefont {{J. C. Driggers et
  al.}}(2019)}]{driggersImprovingAstrophysicalParameter2019}%
  \BibitemOpen
  \bibfield  {author} {\bibinfo {author} {\bibnamefont {{J. C. Driggers et
  al.}}},\ }\href {https://doi.org/10.1103/PhysRevD.99.042001} {\bibfield
  {journal} {\bibinfo  {journal} {Physical Review D}\ }\textbf {\bibinfo
  {volume} {99}},\ \bibinfo {pages} {042001} (\bibinfo {year} {2019})},\
  \Eprint {https://arxiv.org/abs/1806.00532} {arXiv:1806.00532 [astro-ph]}
  \BibitemShut {NoStop}%
\bibitem [{\citenamefont {Davis}\ \emph {et~al.}(2019)\citenamefont {Davis},
  \citenamefont {Massinger}, \citenamefont {Lundgren}, \citenamefont
  {Driggers}, \citenamefont {Urban},\ and\ \citenamefont
  {Nuttall}}]{davisImprovingSensitivityAdvanced2019}%
  \BibitemOpen
  \bibfield  {author} {\bibinfo {author} {\bibfnamefont {D.}~\bibnamefont
  {Davis}}, \bibinfo {author} {\bibfnamefont {T.~J.}\ \bibnamefont
  {Massinger}}, \bibinfo {author} {\bibfnamefont {A.~P.}\ \bibnamefont
  {Lundgren}}, \bibinfo {author} {\bibfnamefont {J.~C.}\ \bibnamefont
  {Driggers}}, \bibinfo {author} {\bibfnamefont {A.~L.}\ \bibnamefont
  {Urban}},\ and\ \bibinfo {author} {\bibfnamefont {L.~K.}\ \bibnamefont
  {Nuttall}},\ }\href {https://doi.org/10.1088/1361-6382/ab01c5} {\bibfield
  {journal} {\bibinfo  {journal} {Classical and Quantum Gravity}\ }\textbf
  {\bibinfo {volume} {36}},\ \bibinfo {pages} {055011} (\bibinfo {year}
  {2019})},\ \Eprint {https://arxiv.org/abs/1809.05348} {arXiv:1809.05348
  [astro-ph]} \BibitemShut {NoStop}%
\bibitem [{\citenamefont {Boche}\ and\ \citenamefont
  {Pohl}(2005)}]{bocheSpectralFactorizationWhitening2005}%
  \BibitemOpen
  \bibfield  {author} {\bibinfo {author} {\bibfnamefont {H.}~\bibnamefont
  {Boche}}\ and\ \bibinfo {author} {\bibfnamefont {V.}~\bibnamefont {Pohl}},\
  }\href {https://doi.org/10.48550/arXiv.cs/0508018} {\bibinfo {title}
  {Spectral {{Factorization}}, {{Whitening-}} and {{Estimation Filter}} --
  {{Stability}}, {{Smoothness Properties}} and {{FIR Approximation Behavior}}}}
  (\bibinfo {year} {2005}),\ \Eprint {https://arxiv.org/abs/cs/0508018}
  {arXiv:cs/0508018} \BibitemShut {NoStop}%
\bibitem [{\citenamefont {{Damera-Venkata}}\ \emph {et~al.}(2000)\citenamefont
  {{Damera-Venkata}}, \citenamefont {Evans},\ and\ \citenamefont
  {McCaslin}}]{damera-venkataDesignOptimalMinimumphase2000}%
  \BibitemOpen
  \bibfield  {author} {\bibinfo {author} {\bibfnamefont {N.}~\bibnamefont
  {{Damera-Venkata}}}, \bibinfo {author} {\bibfnamefont {B.}~\bibnamefont
  {Evans}},\ and\ \bibinfo {author} {\bibfnamefont {S.}~\bibnamefont
  {McCaslin}},\ }\href {https://doi.org/10.1109/78.840000} {\bibfield
  {journal} {\bibinfo  {journal} {IEEE Transactions on Signal Processing}\
  }\textbf {\bibinfo {volume} {48}},\ \bibinfo {pages} {1491} (\bibinfo {year}
  {2000})}\BibitemShut {NoStop}%
\bibitem [{\citenamefont
  {Mecozzi}(2016)}]{mecozziNecessarySufficientCondition2016}%
  \BibitemOpen
  \bibfield  {author} {\bibinfo {author} {\bibfnamefont {A.}~\bibnamefont
  {Mecozzi}},\ }\href {https://doi.org/10.48550/arXiv.1606.04861} {\bibinfo
  {title} {A necessary and sufficient condition for minimum phase and
  implications for phase retrieval}} (\bibinfo {year} {2016}),\ \Eprint
  {https://arxiv.org/abs/1606.04861} {arXiv:1606.04861 [cs]} \BibitemShut
  {NoStop}%
\bibitem [{\citenamefont {Ephremidze}\ \emph {et~al.}(2016)\citenamefont
  {Ephremidze}, \citenamefont {Selesnick},\ and\ \citenamefont
  {Spitkovsky}}]{ephremidzeNonoptimalSpectralFactorizations2016}%
  \BibitemOpen
  \bibfield  {author} {\bibinfo {author} {\bibfnamefont {L.}~\bibnamefont
  {Ephremidze}}, \bibinfo {author} {\bibfnamefont {I.}~\bibnamefont
  {Selesnick}},\ and\ \bibinfo {author} {\bibfnamefont {I.}~\bibnamefont
  {Spitkovsky}},\ }\href {https://doi.org/10.48550/arXiv.1609.02058} {\bibinfo
  {title} {On non-optimal spectral factorizations}} (\bibinfo {year} {2016}),\
  \Eprint {https://arxiv.org/abs/1609.02058} {arXiv:1609.02058 [math]}
  \BibitemShut {NoStop}%
\bibitem [{\citenamefont {Ephremidze}\ \emph {et~al.}(2020)\citenamefont
  {Ephremidze}, \citenamefont {Shargorodsky},\ and\ \citenamefont
  {Spitkovsky}}]{ephremidzeQuantitativeResultsContinuity2020}%
  \BibitemOpen
  \bibfield  {author} {\bibinfo {author} {\bibfnamefont {L.}~\bibnamefont
  {Ephremidze}}, \bibinfo {author} {\bibfnamefont {E.}~\bibnamefont
  {Shargorodsky}},\ and\ \bibinfo {author} {\bibfnamefont {I.}~\bibnamefont
  {Spitkovsky}},\ }\href {https://doi.org/10.1112/jlms.12258} {\bibfield
  {journal} {\bibinfo  {journal} {Journal of the London Mathematical Society}\
  }\textbf {\bibinfo {volume} {101}},\ \bibinfo {pages} {60} (\bibinfo {year}
  {2020})},\ \Eprint {https://arxiv.org/abs/1804.00039} {arXiv:1804.00039
  [math]} \BibitemShut {NoStop}%
\bibitem [{\citenamefont {Droz}\ \emph {et~al.}(1999)\citenamefont {Droz},
  \citenamefont {Knapp}, \citenamefont {Poisson},\ and\ \citenamefont
  {Owen}}]{drozGravitationalWavesInspiraling1999}%
  \BibitemOpen
  \bibfield  {author} {\bibinfo {author} {\bibfnamefont {S.}~\bibnamefont
  {Droz}}, \bibinfo {author} {\bibfnamefont {D.~J.}\ \bibnamefont {Knapp}},
  \bibinfo {author} {\bibfnamefont {E.}~\bibnamefont {Poisson}},\ and\ \bibinfo
  {author} {\bibfnamefont {B.~J.}\ \bibnamefont {Owen}},\ }\href
  {https://doi.org/10.1103/PhysRevD.59.124016} {\bibfield  {journal} {\bibinfo
  {journal} {Physical Review D}\ }\textbf {\bibinfo {volume} {59}},\ \bibinfo
  {pages} {124016} (\bibinfo {year} {1999})},\ \Eprint
  {https://arxiv.org/abs/gr-qc/9901076} {arXiv:gr-qc/9901076} \BibitemShut
  {NoStop}%
\bibitem [{\citenamefont {{The LIGO Scientific Collaboration}}\ \emph
  {et~al.}(2020)\citenamefont {{The LIGO Scientific Collaboration}},
  \citenamefont {{the Virgo Collaboration}},\ and\ \citenamefont {{the KAGRA
  Collaboration}}}]{theligoscientificcollaborationProspectsObservingLocalizing2020}%
  \BibitemOpen
  \bibfield  {author} {\bibinfo {author} {\bibnamefont {{The LIGO Scientific
  Collaboration}}}, \bibinfo {author} {\bibnamefont {{the Virgo
  Collaboration}}},\ and\ \bibinfo {author} {\bibnamefont {{the KAGRA
  Collaboration}}},\ }\href {https://doi.org/10.1007/s41114-020-00026-9}
  {\bibfield  {journal} {\bibinfo  {journal} {Living Reviews in Relativity}\
  }\textbf {\bibinfo {volume} {23}},\ \bibinfo {pages} {3} (\bibinfo {year}
  {2020})},\ \Eprint {https://arxiv.org/abs/1304.0670} {arXiv:1304.0670
  [gr-qc]} \BibitemShut {NoStop}%
\bibitem [{\citenamefont {{The LIGO Scientific Collaboration}}\ and\
  \citenamefont {{the Virgo
  Collaboration}}(2020)}]{theligoscientificcollaborationGuideLIGOVirgoDetector2020}%
  \BibitemOpen
  \bibfield  {author} {\bibinfo {author} {\bibnamefont {{The LIGO Scientific
  Collaboration}}}\ and\ \bibinfo {author} {\bibnamefont {{the Virgo
  Collaboration}}},\ }\href {https://doi.org/10.1088/1361-6382/ab685e}
  {\bibfield  {journal} {\bibinfo  {journal} {Classical and Quantum Gravity}\
  }\textbf {\bibinfo {volume} {37}},\ \bibinfo {pages} {055002} (\bibinfo
  {year} {2020})},\ \Eprint {https://arxiv.org/abs/1908.11170}
  {arXiv:1908.11170 [gr-qc]} \BibitemShut {NoStop}%
\bibitem [{\citenamefont {Husa}\ \emph {et~al.}(2016)\citenamefont {Husa},
  \citenamefont {Khan}, \citenamefont {Hannam}, \citenamefont {P{\"u}rrer},
  \citenamefont {Ohme}, \citenamefont {Forteza},\ and\ \citenamefont
  {Boh{\'e}}}]{husaFrequencydomainGravitationalWaves2016}%
  \BibitemOpen
  \bibfield  {author} {\bibinfo {author} {\bibfnamefont {S.}~\bibnamefont
  {Husa}}, \bibinfo {author} {\bibfnamefont {S.}~\bibnamefont {Khan}}, \bibinfo
  {author} {\bibfnamefont {M.}~\bibnamefont {Hannam}}, \bibinfo {author}
  {\bibfnamefont {M.}~\bibnamefont {P{\"u}rrer}}, \bibinfo {author}
  {\bibfnamefont {F.}~\bibnamefont {Ohme}}, \bibinfo {author} {\bibfnamefont
  {X.~J.}\ \bibnamefont {Forteza}},\ and\ \bibinfo {author} {\bibfnamefont
  {A.}~\bibnamefont {Boh{\'e}}},\ }\href
  {https://doi.org/10.1103/PhysRevD.93.044006} {\bibfield  {journal} {\bibinfo
  {journal} {Physical Review D}\ }\textbf {\bibinfo {volume} {93}},\ \bibinfo
  {pages} {044006} (\bibinfo {year} {2016})},\ \Eprint
  {https://arxiv.org/abs/1508.07250} {arXiv:1508.07250 [gr-qc]} \BibitemShut
  {NoStop}%
\bibitem [{\citenamefont {Poisson}\ and\ \citenamefont
  {Will}(2014)}]{poissonGravityNewtonianPostNewtonian2014a}%
  \BibitemOpen
  \bibfield  {author} {\bibinfo {author} {\bibfnamefont {E.}~\bibnamefont
  {Poisson}}\ and\ \bibinfo {author} {\bibfnamefont {C.~M.}\ \bibnamefont
  {Will}},\ }\href {https://doi.org/10.1017/CBO9781139507486} {\emph {\bibinfo
  {title} {Gravity: {{Newtonian}}, Post-{{Newtonian}}, Relativistic}}},\
  \bibinfo {edition} {first published}\ ed.\ (\bibinfo  {publisher} {Cambridge
  University Press},\ \bibinfo {address} {Cambridge},\ \bibinfo {year}
  {2014})\BibitemShut {NoStop}%
\bibitem [{\citenamefont {Singer}\ and\ \citenamefont
  {Price}(2016)}]{singerRapidBayesianPosition2016}%
  \BibitemOpen
  \bibfield  {author} {\bibinfo {author} {\bibfnamefont {L.~P.}\ \bibnamefont
  {Singer}}\ and\ \bibinfo {author} {\bibfnamefont {L.~R.}\ \bibnamefont
  {Price}},\ }\href {https://doi.org/10.1103/PhysRevD.93.024013} {\bibfield
  {journal} {\bibinfo  {journal} {Physical Review D}\ }\textbf {\bibinfo
  {volume} {93}},\ \bibinfo {pages} {024013} (\bibinfo {year} {2016})},\
  \Eprint {https://arxiv.org/abs/1508.03634} {arXiv:1508.03634 [gr-qc]}
  \BibitemShut {NoStop}%
\bibitem [{\citenamefont {Nelder}\ and\ \citenamefont
  {Mead}(1965)}]{nelderSimplexMethodFunction1965}%
  \BibitemOpen
  \bibfield  {author} {\bibinfo {author} {\bibfnamefont {J.~A.}\ \bibnamefont
  {Nelder}}\ and\ \bibinfo {author} {\bibfnamefont {R.}~\bibnamefont {Mead}},\
  }\href {https://doi.org/10.1093/comjnl/7.4.308} {\bibfield  {journal}
  {\bibinfo  {journal} {The Computer Journal}\ }\textbf {\bibinfo {volume}
  {7}},\ \bibinfo {pages} {308} (\bibinfo {year} {1965})}\BibitemShut {NoStop}%
\bibitem [{\citenamefont {Chen}\ \emph {et~al.}(2021)\citenamefont {Chen},
  \citenamefont {Holz}, \citenamefont {Miller}, \citenamefont {Evans},
  \citenamefont {Vitale},\ and\ \citenamefont
  {Creighton}}]{chenDistanceMeasuresGravitationalwave2021}%
  \BibitemOpen
  \bibfield  {author} {\bibinfo {author} {\bibfnamefont {H.-Y.}\ \bibnamefont
  {Chen}}, \bibinfo {author} {\bibfnamefont {D.~E.}\ \bibnamefont {Holz}},
  \bibinfo {author} {\bibfnamefont {J.}~\bibnamefont {Miller}}, \bibinfo
  {author} {\bibfnamefont {M.}~\bibnamefont {Evans}}, \bibinfo {author}
  {\bibfnamefont {S.}~\bibnamefont {Vitale}},\ and\ \bibinfo {author}
  {\bibfnamefont {J.}~\bibnamefont {Creighton}},\ }\href
  {https://doi.org/10.1088/1361-6382/abd594} {\bibfield  {journal} {\bibinfo
  {journal} {Classical and Quantum Gravity}\ }\textbf {\bibinfo {volume}
  {38}},\ \bibinfo {pages} {055010} (\bibinfo {year} {2021})},\ \Eprint
  {https://arxiv.org/abs/1709.08079} {arXiv:1709.08079 [astro-ph]} \BibitemShut
  {NoStop}%
\bibitem [{\citenamefont {Sachdev}\ \emph {et~al.}(2019)\citenamefont
  {Sachdev}, \citenamefont {Caudill}, \citenamefont {Fong}, \citenamefont {Lo},
  \citenamefont {Messick}, \citenamefont {Mukherjee}, \citenamefont {Magee},
  \citenamefont {Tsukada}, \citenamefont {Blackburn}, \citenamefont {Brady},
  \citenamefont {Brockill}, \citenamefont {Cannon}, \citenamefont {Chamberlin},
  \citenamefont {Chatterjee}, \citenamefont {Creighton}, \citenamefont
  {Godwin}, \citenamefont {Gupta}, \citenamefont {Hanna}, \citenamefont
  {Kapadia}, \citenamefont {Lang}, \citenamefont {Li}, \citenamefont {Meacher},
  \citenamefont {Pace}, \citenamefont {Privitera}, \citenamefont {Sadeghian},
  \citenamefont {Wade}, \citenamefont {Wade}, \citenamefont {Weinstein},\ and\
  \citenamefont {Xiao}}]{sachdevGstLALSearchAnalysis2019a}%
  \BibitemOpen
  \bibfield  {author} {\bibinfo {author} {\bibfnamefont {S.}~\bibnamefont
  {Sachdev}}, \bibinfo {author} {\bibfnamefont {S.}~\bibnamefont {Caudill}},
  \bibinfo {author} {\bibfnamefont {H.}~\bibnamefont {Fong}}, \bibinfo {author}
  {\bibfnamefont {R.~K.~L.}\ \bibnamefont {Lo}}, \bibinfo {author}
  {\bibfnamefont {C.}~\bibnamefont {Messick}}, \bibinfo {author} {\bibfnamefont
  {D.}~\bibnamefont {Mukherjee}}, \bibinfo {author} {\bibfnamefont
  {R.}~\bibnamefont {Magee}}, \bibinfo {author} {\bibfnamefont
  {L.}~\bibnamefont {Tsukada}}, \bibinfo {author} {\bibfnamefont
  {K.}~\bibnamefont {Blackburn}}, \bibinfo {author} {\bibfnamefont
  {P.}~\bibnamefont {Brady}}, \bibinfo {author} {\bibfnamefont
  {P.}~\bibnamefont {Brockill}}, \bibinfo {author} {\bibfnamefont
  {K.}~\bibnamefont {Cannon}}, \bibinfo {author} {\bibfnamefont {S.~J.}\
  \bibnamefont {Chamberlin}}, \bibinfo {author} {\bibfnamefont
  {D.}~\bibnamefont {Chatterjee}}, \bibinfo {author} {\bibfnamefont {J.~D.~E.}\
  \bibnamefont {Creighton}}, \bibinfo {author} {\bibfnamefont {P.}~\bibnamefont
  {Godwin}}, \bibinfo {author} {\bibfnamefont {A.}~\bibnamefont {Gupta}},
  \bibinfo {author} {\bibfnamefont {C.}~\bibnamefont {Hanna}}, \bibinfo
  {author} {\bibfnamefont {S.}~\bibnamefont {Kapadia}}, \bibinfo {author}
  {\bibfnamefont {R.~N.}\ \bibnamefont {Lang}}, \bibinfo {author}
  {\bibfnamefont {T.~G.~F.}\ \bibnamefont {Li}}, \bibinfo {author}
  {\bibfnamefont {D.}~\bibnamefont {Meacher}}, \bibinfo {author} {\bibfnamefont
  {A.}~\bibnamefont {Pace}}, \bibinfo {author} {\bibfnamefont {S.}~\bibnamefont
  {Privitera}}, \bibinfo {author} {\bibfnamefont {L.}~\bibnamefont
  {Sadeghian}}, \bibinfo {author} {\bibfnamefont {L.}~\bibnamefont {Wade}},
  \bibinfo {author} {\bibfnamefont {M.}~\bibnamefont {Wade}}, \bibinfo {author}
  {\bibfnamefont {A.}~\bibnamefont {Weinstein}},\ and\ \bibinfo {author}
  {\bibfnamefont {S.~L.}\ \bibnamefont {Xiao}},\ }\href
  {https://doi.org/10.48550/arXiv.1901.08580} {\bibinfo {title} {The {{GstLAL
  Search Analysis Methods}} for {{Compact Binary Mergers}} in {{Advanced
  LIGO}}'s {{Second}} and {{Advanced Virgo}}'s {{First Observing Runs}}}}
  (\bibinfo {year} {2019}),\ \Eprint {https://arxiv.org/abs/1901.08580}
  {arXiv:1901.08580 [gr-qc]} \BibitemShut {NoStop}%
\bibitem [{\citenamefont {Boland}\ \emph {et~al.}(2021)\citenamefont {Boland},
  \citenamefont {Naus},\ and\ \citenamefont
  {Zwamborn}}]{bolandPhaseResponseReconstruction2021}%
  \BibitemOpen
  \bibfield  {author} {\bibinfo {author} {\bibfnamefont {T.}~\bibnamefont
  {Boland}}, \bibinfo {author} {\bibfnamefont {R.}~\bibnamefont {Naus}},\ and\
  \bibinfo {author} {\bibfnamefont {P.}~\bibnamefont {Zwamborn}},\ }\href
  {https://doi.org/10.1049/smt2.12063} {\bibfield  {journal} {\bibinfo
  {journal} {IET Science, Measurement \& Technology}\ }\textbf {\bibinfo
  {volume} {15}},\ \bibinfo {pages} {619} (\bibinfo {year} {2021})},\ \Eprint
  {https://arxiv.org/abs/2003.13781} {arXiv:2003.13781 [eess]} \BibitemShut
  {NoStop}%
\bibitem [{\citenamefont
  {Nakahara}(2005)}]{nakaharaGeometryTopologyPhysics2005}%
  \BibitemOpen
  \bibfield  {author} {\bibinfo {author} {\bibfnamefont {M.}~\bibnamefont
  {Nakahara}},\ }\href@noop {} {\emph {\bibinfo {title} {Geometry, Topology,
  and Physics}}},\ \bibinfo {edition} {2nd}\ ed.,\ Graduate Student Series in
  Physics\ (\bibinfo  {publisher} {Inst. of Physics Publishing},\ \bibinfo
  {address} {Bristol},\ \bibinfo {year} {2005})\BibitemShut {NoStop}%
\bibitem [{\citenamefont {Amari}\ and\ \citenamefont
  {Nagaoka}(2000)}]{amariMethodsInformationGeometry2000}%
  \BibitemOpen
  \bibfield  {author} {\bibinfo {author} {\bibfnamefont {S.-i.}\ \bibnamefont
  {Amari}}\ and\ \bibinfo {author} {\bibfnamefont {H.}~\bibnamefont
  {Nagaoka}},\ }\href@noop {} {\emph {\bibinfo {title} {Methods of Information
  Geometry}}},\ \bibinfo {series} {Translations of Mathematical Monographs},
  Vol.\ \bibinfo {volume} {191}\ (\bibinfo  {publisher} {American Mathematical
  Society},\ \bibinfo {year} {2000})\BibitemShut {NoStop}%
\bibitem [{\citenamefont {Rao}(1945)}]{raoInformationAccuracyAttainment1945}%
  \BibitemOpen
  \bibfield  {author} {\bibinfo {author} {\bibfnamefont {C.~R.}\ \bibnamefont
  {Rao}},\ }\href@noop {} {\bibfield  {journal} {\bibinfo  {journal} {Bulletin
  of the Calcutta Mathematical Society}\ }\textbf {\bibinfo {volume} {37}},\
  \bibinfo {pages} {81} (\bibinfo {year} {1945})}\BibitemShut {NoStop}%
\end{thebibliography}%

\end{document}